\title{On the Estimation of Own Funds for Life Insurers:\\ A Study of Direct, Indirect, and Control Variate Methods\\ in a Risk-Neutral Pricing Framework}
\author{Mark-Oliver Wolf$^{1,2,}$\footnote{\href{mailto:mark-oliver.wolf@outlook.de}{mark-oliver.wolf@outlook.de}}\;\,\orcidlink{0000-0002-3698-9266}}
\date{%
	$^1$Fraunhofer Institute for Industrial Mathematics ITWM\\%
	$^2$University of Kaiserslautern-Landau RPTU\\[3ex]%
	\today
}
\begin{document}


\maketitle

\begin{abstract}
	The Solvency Capital Requirement ($\SCR$) calculation is computationally intensive, relying on the market-consistent estimation of own funds. While Solvency II prioritizes the direct valuation method, it theoretically yields the same value as the indirect method. This paper evaluates their practical performance within a risk-neutral pricing framework.

	First, we present a simplified proof that direct and indirect estimators converge to the same value. For $T$ being the number of time steps in the simulation, we then introduce a novel family of $2^T$ mixed estimators including both methods as edge cases, integrating them into a control variate framework for significant variance reduction. This framework is further extended to incorporate market frictions for real-world applicability.

	Evaluating these estimators on three life insurance asset-liability management models demonstrates that their performance is fundamentally driven by the degree of asset-liability coupling. While stronger coupling in realistic settings consistently favors the indirect method, neither baseline estimator is universally superior. Furthermore, this coupling directly impacts the success of our proposed control variates. They can dramatically reduce variance to one-tenth of the standard direct estimator, but their efficacy remains model-dependent.
	\newline\newline
	The source code is publicly available on the \href{https://gitlab.cc-asp.fraunhofer.de/itwm-fm-lv-public/wolf-estimation-of-own-funds}{Fraunhofer Gitlab}.

	\vspace{1em}
	\noindent\textbf{Keywords:} Solvency II, Own Funds, Available Capital, Asset-Liability Management, Variance Reduction, Risk-Neutral Pricing, Life Insurance.
\end{abstract}

\newpage
\newpage

\renewcommand{\arraystretch}{1.3}


\section{Introduction}\label{sec:introduction}

The Solvency~II directive (cf.\ \cite{eiopa_Directive2009138_2009}), officially implemented in January 2016, represents the prevailing prudential framework for insurance and reinsurance entities withing the European Union (cf.\ \cite{eiopa_SolvencyII_}).
Within this framework, the concept of \emph{own funds} defines the solvency capital requirement ($\SCR$) via its $99.5\%$ quantile, representing the regulatory capital that must be held as a buffer against potential losses. Per \cite[Article~88 and~77]{eiopa_Directive2009138_2009}, it is defined as `the excess of assets over liabilities, [based on a market-consistent valuation].'
Under the Swiss Solvency Test, this concept is referred to as \emph{risk-bearing capital} (cf.\ \cite{finma_SwissSolvencyTest_2024}).
Mathematically, it is sometimes called \emph{available capital} ($\AC$) to be independent of any changing regulatory frameworks. Hence, we are also adopting this term for the remainder of this article.

Naturally, the determination of the $\AC$ requires knowing the market-value of assets and liabilities. While in practice not all assets are marked-to-market, we assume for the sake of clarity that asset prices are observable and readily available without the need for simulation. In contrast, the exact market-value of liabilities is a priori unknown. Crucially, many insurance liabilities involve strongly path-dependent cash flows, driven by complex product features such as cliquet-style guarantees, dynamic profit-sharing rules, policyholder behavior (e.g., lapses), and various frictional costs. This inherent complexity renders closed-form valuation infeasible and forces practitioners to rely on Monte-Carlo schemes to project and discount the associated cash flows.

As mandated by regulatory frameworks, the $\AC$ has to be calculated at a risk horizon, typically one year into the future. Under a Monte-Carlo framework, this requires a primary simulation to project economic scenarios up to the risk horizon, and a secondary, nested simulation to estimate the $\AC$ via its conditional expectation. Evaluating complex, path-dependent cash flows within such a nested setup imposes an immense computational burden. Consequently, firms typically perform crude nested Monte Carlo estimation only for mandatory regulatory reporting. For more frequent, internal risk monitoring, identifying and applying effective variance reduction techniques is genuinely valuable, as it bridges the gap between theoretical solvency requirements and practical feasibility. To this end, the industry often employs approximation techniques like replicating portfolios (cf.\ \cite{beutner_FastConvergenceRegressLater_2013}, \cite{natolski_MathematicalFoundationReplicating_2018}, \cite{fernandez-arjona_MachineLearningApproach_2022}) or Least-Squares Monte Carlo (cf.\ \cite{bauer_SolvencyIINested_2009}, \cite{benedetti_CalculationRIskMeasures_2017}, \cite{krah_LeastSquaresMonteCarlo_2018}, \cite{ha_LeastsquaresMonteCarlo_2022}) to estimate the $\AC$ more efficiently (cf.\ \cite{pelsser_DifferenceLSMCReplicating_2016} for a comparison).

Under Solvency~II, the market-consistent value of liabilities is primarily given by the best-estimate liabilities (BEL), calculated as the probability‑weighted average of future cash flows discounted with the risk‑free term structure (\cite[Art.~75]{eiopa_Directive2009138_2009}). Following \cite{bauer_CalculationSolvencyCapital_2012a}, this corresponds to the \emph{direct method} (or option-pricing method), which determines value based on the expected discounted cash flows of the insurance obligations. In contrast, for internal capital management and investor communication, many insurers rely on embedded‑value frameworks (e.g., MCEV or EEV in the past), known as the \emph{indirect method} (or actuarial appraisal method), cf.\ \cite{americanacademyofactuaries_FairValuationInsurance_2002}. This approach assesses value through the lens of shareholder profitability, deriving the liability value by combining the net asset value with the present value of future profits (PVFP). While these approaches offer distinct perspectives, \cite{girard_MarketValueInsurance_2000} and \cite{girard_ApproachFairValuation_2002} have proven their theoretical equivalence, provided a consistent set of assumptions is applied.

\cite{bauer_CalculationSolvencyCapital_2012a} studied the general mathematical framework underlying the $\SCR$ computation in Solvency II. They also derived pricing formulas for the two methods using the no-arbitrage theory of pricing via equivalent martingale measures (also called risk-neutral pricing framework). Regarding these two methods, they argue:
\begin{displayquote}
	`While of course the quantity to be estimated is \textemdash{} or at least should be \textemdash{} the same for both procedures, the two methods may well yield different estimators for the $\AC$ and, hence, for the SCR. [\dots] In particular, our numerical experiments illustrate that the quality of the resulting estimates can differ significantly.'
\end{displayquote}

Their research motivates several key questions:
\begin{itemize}
	\item What is the key condition such that direct and indirect estimator converge to the same quantity?
	\item How can we validate a correct implementation of either method?
	\item Are these two methods the only possible estimators?
	\item Can the direct and indirect estimators be combined within a control variate framework to produce a more robust and efficient estimate?
	\item Which estimator is more computationally efficient for a given model, and what determines its superiority?
\end{itemize}

In this paper, we answer these questions with a novel view and improved estimators for calculating the $\AC$.
With minimal assumptions, we present a straightforward proof that the direct and indirect estimator converge to the same capital in \Cref{subsec:fair_value_of_assets}.
We present a simplified example in \Cref{rem:edge_cases_of_estimators_variance} highlighting that the superiority of either estimator (i.e., having less variance) is a priori unclear.
Surprisingly, the mathematical proof of their equality relies only on the no-arbitrage valuation of the asset process itself, regardless of the complexity of the rules governing the liability cash flows.
Because of this equality, the indirect method can be used in practice as a check for the direct method's correct implementation and vice versa.
The proof highlights a key equality based on the tower property of conditional expectations, allowing us to construct a new family of \emph{mixed estimators} consisting of $2^{\text{number of time steps}}$ distinct estimators for the $\AC$ in \Cref{subsec:mixed_estimator}. We show that direct and indirect estimator are special cases of mixed estimators.
The equal expectation of all estimators immediately implies that their difference has a known expectation of zero.
We leverage this insight to construct single- and multi-control variate estimators for the $\AC$ in \Cref{sec:control_variates}, where we plug our novel estimators into the well known existing control variate framework to enable an effective variance reduction. In \Cref{sec:leakage}, we extend the model to include frictional costs.

The performance of the estimators is evaluated in \Cref{sec:numerical_study} using three benchmark models representing life insurance companies.
We first consider two variants of the model developed by \cite{bauer_RiskneutralValuationParticipating_2006}: the MUST case, representing a mandatory policyholder participation scheme, and the IS case, a more realistic target-based approach.
For a more comprehensive setting, we then employ the openIRM model by \cite{wolf_OpenIRMPubliclyAccessible_2025}, an artificial internal risk model designed to be more representative of those used in practice.
The results indicate that estimator efficiency is highly sensitive to the model's parameterization. These findings reinforce the intuition from \Cref{rem:edge_cases_of_estimators_variance}: the relative variance of either method depends on the degree of coupling between assets and liabilities. Specifically, the indirect estimator becomes more efficient as policyholder participation in market dynamics increases.

In more realistic setups, the indirect estimator is often the preferred choice. Furthermore, the control variate approach combining the direct and indirect methods dramatically outperforms individual estimators in complex configurations, though its gains are less pronounced when policyholders have a smaller share of market returns.

The remainder of this article is organized as follows. \Cref{sec:underlying_model} establishes the mathematical framework and the risk-neutrality assumption. \Cref{sec:direct_and_indirect_method} presents the economic derivation of both valuation methods to clarify their conceptual origins. Finally, \Cref{sec:conclusion} summarizes our findings and outlines directions for future research.

\section{Preliminaries}\label{sec:preliminaries}

\subsection{Mathematical Framework}\label{sec:underlying_model}

The presented model is an abstraction of the typical asset-liability management models used, shown in for example \cite{bauer_RiskneutralValuationParticipating_2006}, \cite{bauer_CalculationSolvencyCapital_2012a}, and \cite{wolf_OpenIRMPubliclyAccessible_2025}.
We model the asset-liability management of a life insurance company for $N$ equidistant time steps $t = 0, \dt, 2\dt, \dots, \timehorizon \dt$, where $\dt$ is the length of each step in years and $\timehorizon \dt$ is the final simulation time. We assume that $1 / \dt \in \mathbb{N}$, i.e., full years fall on the time grid. For simplicity, we denote the time directly via its index $t = 0, 1, \dots, \timehorizon$.
Let $(\Omega, \cF, \Prob, (\cF_t)_{t = 0, 1, \dots, \timehorizon})$ be a complete filtered probability space on which all relevant quantities exist, where $\Omega$ denotes the space of all possible states of the financial market and $\Prob$ is the physical (real-world) probability measure. The $\sigma$-algebra $\cF_t$ represents all information about the market up to time $t$, and the filtration $(\cF_t)_{t = 0, 1, \dots, \timehorizon}$ is assumed to satisfy the usual conditions.

\begin{defi}[Abstract ALM model]
	At the beginning of the simulation, our company starts with initial assets $\assetsZero$ and initial liabilities $\liabsZero$.
	All of our assets are invested in a financial market, so for each time step we get a market return $\assetsReturn{t}$. For $t > 0$, we get
	\begin{align}
		\assetsBefore{t} \coloneqq \assetsAfter{t-1} \assetsReturn{t}.
	\end{align}
	For each time $t$, the assets generate a shareholder cash flow $\shcf{t}$ and a policyholder cash flow $\phcf{t}$.
	For $t>0$, we get
	\begin{align}\label{eq:assets_cash_flows}
		\assetsAfter{t} \coloneqq \assetsBefore{t} - \shcf{t} - \phcf{t}.
	\end{align}
	The company's assets are always valued mark-to-market in this model.

	Similarly to the assets, the company's liabilities are described by the processes $\liabsBefore{t}$ and $\liabsAfter{t}$. A more detailed description of their dynamics is not needed here.
	We define the free fund before and after the cash flows as $\freefundBefore{t} \coloneqq \assetsBefore{t} - \liabsBefore{t}$ and $\freefundAfter{t} \coloneqq \assetsAfter{t} - \liabsAfter{t}$, respectively.
\end{defi}

\begin{defi}[Risk-neutral Abstract ALM model]
	We call an abstract ALM model \emph{risk-neutral}, if there exists an equivalent martingale measure $\mQ$ with numeraire $\numeraire{t}$ such that
	\begin{align}\label{eq:martingale_property}
		\E^{\mQ}*{\frac{\numeraire{t}}{\numeraire{t-1}} \assetsBefore{t} | \cF_{t-1}} = \assetsAfter{t-1},
		\hspace{1cm}
		\text{or equivalently}
		\hspace{1cm}
		\E^{\mQ}*{\frac{\numeraire{t}}{\numeraire{t-1}} \assetsReturn{t} | \cF_{t-1}} = 1.
	\end{align}
	We define the abbreviated notation $\disc{t_1}{t_2} \coloneqq \numeraire{t_2} / \numeraire{t_1}$.
\end{defi}

\subsection{Definition of direct and indirect method}\label{sec:direct_and_indirect_method}

By \cite{bauer_CalculationSolvencyCapital_2012a}, the available capital corresponds to the amount of available financial resources that can serve as a buffer against risks and absorb financial losses.
Under the regulatory reporting of Solvency II, it has to be computed by studying the policyholders' future cash flows, which corresponds to the \emph{direct method}.
An alternative method for internal reporting is the \emph{indirect method}, where the shareholders' cash flows are observed.
It is similar to the market-consistent embedded value ($\MCEV$), which is
`[...] a measure of the consolidated value of shareholders' interest in the covered business' (cf.\ \cite{cfoforum_EuropeanEmbeddedValue_2016}).
By \cite{deloitteemeaa&is_FinancialKPIsIFRS_2023}, the MCEV is less relevant for regulatory purposes in countries that are subject to Solvency II, but still commonly used in other countries like the Swiss market.
We will denote the available capital computed by the direct and indirect method as $\ACdir$ and $\ACind$, respectively.

\cite{bauer_CalculationSolvencyCapital_2012a} derived that in Solvency II, the methods can be written for a risk horizon of~$\tau$ representing $1$ year as
\begin{align}
	\begin{split}
		\ACdir_1 &= \MVA_1 - \MVL_1 \\
		&= \MVA_1 - \E^\mQ*{\substack{\text{cash flow to}\\ \text{policyholders}} | \cF_1},
	\end{split} \\
	\begin{split}
		\ACind_1 &= \MCEV_1 \\
		&= \ANAV_1 + \PVFP_1 - \CoC_1 \\
		&\;\hat{=} \E^\mQ*{\substack{\text{cash flow to}\\ \text{shareholders}} | \cF_1},
	\end{split}
\end{align}
where $\MVA$ and $\MVL$ are the market-value of assets and liabilities, respectively. The present value of future profits ($\PVFP$) is defined as the expected (time value-adjusted) return on investment for shareholders, the cost-of-capital ($\CoC$) is assumed to be 0 and the adjusted net asset value ($\ANAV$) is assumed to be the net asset value ($\NAV$).

We can translate this definition to our abstract ALM model.

\begin{defi}
	In an abstract asset-liability management (ALM) model, the direct and indirect available capital at time $\riskhorizon$ and with time horizon $\timehorizon>\riskhorizon$ are defined as:
	\begin{align}
		\ACdir_\riskhorizon(\timehorizon) &\coloneqq \assetsAfter{\riskhorizon} - \E^\mQ[\bigg]{
			\sum_{t = \riskhorizon + 1}^{\timehorizon} \disc{\riskhorizon}{t} \phcf{t}	+ \disc{\riskhorizon}{\timehorizon} \liabsAfter{\timehorizon}
			| \cF_\riskhorizon
		}, \\
		\ACind_\riskhorizon(\timehorizon) &\coloneqq \E^\mQ[\bigg]{
			\sum_{t = \riskhorizon + 1}^{\timehorizon} \disc{\riskhorizon}{t} \shcf{t}
			+ \disc{\riskhorizon}{\timehorizon} \freefundAfter{\timehorizon}
			| \cF_\riskhorizon
		}.
	\end{align}

	In practice, the available capital has to be estimated using observations produced by a simulation model.
	For this purpose, we define the direct and indirect present value $\PV$ at time $\riskhorizon$ as
	\begin{align}
		\PVdir_\riskhorizon(\timehorizon)
		&\coloneqq
		\sum_{t = \riskhorizon + 1}^{\timehorizon} \disc{\riskhorizon}{t} \phcf{t}
		+ \disc{\riskhorizon}{\timehorizon} \liabsAfter{\timehorizon}, \\
		\PVind_\riskhorizon(\timehorizon)
		&\coloneqq
		\sum_{t = \riskhorizon + 1}^{\timehorizon} \disc{\riskhorizon}{t} \shcf{t}
		+ \disc{\riskhorizon}{\timehorizon} \freefundAfter{\timehorizon}.
	\end{align}
\end{defi}

By sampling from their respective distribution conditioned on $\cF_\tau$, the risk model produces samples $\PVdir_{\riskhorizon, j}$ and $\PVind_{\riskhorizon, j}$.
Then, for $\methodchoice$ being the chosen method, a naive Monte Carlo estimation with $\nInner$ realizations results in the following estimator:
\begin{align}
	\ACest_\riskhorizon^{\methodchoice}
	\coloneqq
	\frac{1}{\nInner} \sum_{j=1}^{\nInner} \PV_{\riskhorizon, j}^{\methodchoice},
\end{align}
where of course $\E^{}*{\ACest_\riskhorizon^{\methodchoice} | \cF_\tau} = \AC_\riskhorizon^{\methodchoice}(T)$.

Ultimately, the $\SCR$ is then the $99.5\%$ percentile over a sample of observations $\ACest_{\riskhorizon, i}^{\mathrm{M}}, i=1,\dots,n$ for varying risk factors expressed in $\cF_{\riskhorizon, i}$.

\section{Uniqueness of available capital and alternative estimators}\label{sec:uniqueness_of_ac_and_mixed_estimators}

\subsection{Fair value of assets and equal expectation of direct and indirect estimator}\label{subsec:fair_value_of_assets}
For the derivation of available capital estimators, the fair value of assets plays a major role.
Before presenting a generalized proof in \Cref{cor:equality_of_mixed_ac}, this section establishes the equality of the methods in \Cref{thm:equality_dir_ind}. This initial result serves a didactic purpose, providing a clear insight into the fundamental reason for the equality without the complexities of the general case for mixed estimators.
We start by proving a fundamental equality for an ALM model with a single cash flow that simplifies the general idea. Afterwards, we can quickly derive the equalities for the available capital we are interested in.
This first corollary states the fundamental principle of asset valuation in a risk-neutral framework. It confirms that the value of the asset at any time is precisely the risk-neutral expected present value of all its subsequent cash flows and its terminal value.

\begin{cor}\label[cor]{cor:fair_price_of_asset}
	Let $\assetsBefore{t} = \assetsAfter{t-1} \assetsReturn{t}$ and $\assetsReturn{t}$ have the risk-neutral market property specified in \eqref{eq:martingale_property}. Furthermore, let $\assetsAfter{t} = \assetsBefore{t} - \cf{t}$ for some arbitrary cash flow $\cf{t}$ and for all $t$.
	Then, for all $\riskhorizon \in \{0, 1, 2, \dots, \timehorizon\}$, we have
	\begin{equation}
		\assetsAfter{\riskhorizon}
		=
		\E^{\mQ}*{
			\disc{\riskhorizon}{\timehorizon} \assetsAfter{\timehorizon}
			+ \sum_{t=\riskhorizon+1}^{\timehorizon} \disc{\riskhorizon}{t} \cf{t}
			| \cF_\riskhorizon
		}.
	\end{equation}
\end{cor}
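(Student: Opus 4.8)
The plan is to reduce everything to a single one-step identity and then telescope. Merging the two recursions $\assetsBefore{t} = \assetsAfter{t-1}\assetsReturn{t}$ and $\assetsAfter{t} = \assetsBefore{t} - \cf{t}$ gives $\assetsAfter{t-1}\assetsReturn{t} = \assetsAfter{t} + \cf{t}$, so the pre-cash-flow value at time $t$ is nothing but the post-cash-flow value plus the cash flow paid at $t$. Substituting this into the martingale property \eqref{eq:martingale_property}, written in the form $\E^{\mQ}*{\disc{t-1}{t}\assetsBefore{t} | \cF_{t-1}} = \assetsAfter{t-1}$, yields the one-step relation $\assetsAfter{t-1} = \E^{\mQ}*{\disc{t-1}{t}\bigl(\assetsAfter{t} + \cf{t}\bigr) | \cF_{t-1}}$, i.e.\ exactly the claimed identity for the horizon pair $(t-1,t)$.

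From here I would fix $\riskhorizon$ and, for each $t \in \{\riskhorizon+1, \dots, \timehorizon\}$, multiply the one-step relation at time $t$ by the ($\cF_{t-1}$-measurable) factor $\disc{\riskhorizon}{t-1}$ and take $\E^{\mQ}*{\,\cdot\, | \cF_\riskhorizon}$ on both sides; the tower property collapses the nested conditioning and the multiplicative consistency $\disc{\riskhorizon}{t-1}\disc{t-1}{t} = \disc{\riskhorizon}{t}$ simplifies the discount factors, giving $\E^{\mQ}*{\disc{\riskhorizon}{t-1}\assetsAfter{t-1} | \cF_\riskhorizon} = \E^{\mQ}*{\disc{\riskhorizon}{t}\assetsAfter{t} | \cF_\riskhorizon} + \E^{\mQ}*{\disc{\riskhorizon}{t}\cf{t} | \cF_\riskhorizon}$. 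Writing $V_t \coloneqq \E^{\mQ}*{\disc{\riskhorizon}{t}\assetsAfter{t} | \cF_\riskhorizon}$ for $t \ge \riskhorizon$ — so that $V_\riskhorizon = \assetsAfter{\riskhorizon}$ by $\cF_\riskhorizon$-measurability and $\disc{\riskhorizon}{\riskhorizon} = 1$ — this says $V_{t-1} - V_t = \E^{\mQ}*{\disc{\riskhorizon}{t}\cf{t} | \cF_\riskhorizon}$, and summing over $t = \riskhorizon+1, \dots, \timehorizon$ telescopes to $\assetsAfter{\riskhorizon} = V_\riskhorizon = V_\timehorizon + \sum_{t=\riskhorizon+1}^{\timehorizon}\E^{\mQ}*{\disc{\riskhorizon}{t}\cf{t} | \cF_\riskhorizon}$, which is the assertion after linearity of the expectation. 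Equivalently, the same computation can be packaged as a downward induction on $\riskhorizon$ from $\timehorizon$ to $0$, with the empty-sum base case $\riskhorizon = \timehorizon$ being immediate.

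The only genuinely delicate point is the bookkeeping around the tower property: one needs the numeraire, hence every discount factor $\disc{t_1}{t_2}$ with $t_1 \le t_2$, to be adapted so that it may legitimately be pulled inside $\E^{\mQ}*{\,\cdot\,|\cF_\riskhorizon}$, and one must keep the telescoping of the discount factors and the shift of the summation index from $\{\riskhorizon+1,\dots,\timehorizon\}$ in step with each other. Integrability of all discounted cash flows is assumed throughout so the conditional expectations are well defined. What I would emphasise is how little goes into the argument: beyond the single martingale property \eqref{eq:martingale_property} for the asset return, nothing about the dynamics of $\cf{t}$ (or of the liabilities) is used — which is exactly the robustness that \Cref{thm:equality_dir_ind} and, later, \Cref{cor:equality_of_mixed_ac} will exploit.
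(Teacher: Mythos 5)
Your proof is correct and follows essentially the same route as the paper's: both rest on the one-step identity $\assetsAfter{t-1} = \E^{\mQ}*{\disc{t-1}{t}\left(\assetsAfter{t} + \cf{t}\right) | \cF_{t-1}}$ obtained from the martingale property \eqref{eq:martingale_property}, chained across time steps via the tower property. The paper packages the chaining as an induction on $\timehorizon$ while you unroll it as a telescoping sum of the $V_t$, which is a purely presentational difference (as you note yourself).
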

\begin{proof}
	We prove the statement with an induction over the time horizon $\timehorizon$. We begin with $\timehorizon = \riskhorizon + 1$. Then
	\begin{equation}
		\assetsAfter{\riskhorizon+1} = \assetsAfter{\riskhorizon} \assetsReturn{\riskhorizon+1} - \cf{\riskhorizon+1},
	\end{equation}
	hence we can apply the risk-neutral property \eqref{eq:martingale_property} and get
	\begin{align}
		\E^{\mQ}*{
			\disc{\riskhorizon}{\riskhorizon+1} \left( \assetsAfter{\riskhorizon+1} + \cf{\riskhorizon+1} \right)
			| \cF_\riskhorizon
		}
		= \assetsAfter{\riskhorizon}.
	\end{align}

	We proceed with the induction step, $\timehorizon-1 \to \timehorizon$. We apply the tower property for conditional expectations to get
	\begin{align}\label{eq:proof_eq_1}
	\begin{split}
		\E^{\mQ}*{
			\disc{\riskhorizon}{\timehorizon} \assetsAfter{\timehorizon}
			| \cF_\riskhorizon
		}
		&=
		\E^{\mQ}*{
			\disc{\riskhorizon}{\timehorizon-1} \disc{\timehorizon-1}{\timehorizon} \left( \assetsAfter{\timehorizon-1} \assetsReturn{\timehorizon} - \cf{\timehorizon} \right)
			| \cF_\riskhorizon
		} \\
		&=
		\E^{\mQ}*{
			\disc{t}{\timehorizon-1}
			\E^{\mQ}*{
				\disc{\timehorizon-1}{\timehorizon} \left( \assetsAfter{\timehorizon-1} \assetsReturn{\timehorizon} - \cf{\timehorizon} \right)
				| \cF_{\timehorizon-1}
			}
			| \cF_\riskhorizon
		} \\
		&=
		\E^{\mQ}*{
			\disc{\riskhorizon}{\timehorizon-1} \assetsAfter{\timehorizon-1}
			- \disc{\riskhorizon}{\timehorizon} \cf{\timehorizon}
			| \cF_\riskhorizon
		}.
	\end{split}
	\end{align}

	Ultimately, we use \eqref{eq:proof_eq_1} and the induction hypothesis for $\timehorizon-1$ to finish the proof via
	\begin{align}
	\begin{split}
		&\E^{\mQ}*{
			\disc{\riskhorizon}{\timehorizon} \assetsAfter{\timehorizon}
			+ \sum_{t=\riskhorizon+1}^{\timehorizon} \disc{\riskhorizon}{t} \cf{t}
			| \cF_\riskhorizon
		} \\
		&=
		\E^{\mQ}*{
			\disc{\riskhorizon}{\timehorizon-1} \assetsAfter{\timehorizon-1}
			+ \sum_{t=\riskhorizon+1}^{\timehorizon-1} \disc{\riskhorizon}{t} \cf{t}
			| \cF_\riskhorizon
		} \\
		&\stackrel{\text{(IH)}}{=}
		\assetsAfter{\riskhorizon}.
	\end{split}
	\end{align}
\end{proof}

The next remark follows directly from \Cref{cor:fair_price_of_asset}.

\begin{rem}\label[rem]{rem:equality_of_assets_equation}
	Setting $\cf{t} = \cf{t}^A + \cf{t}^B$, we can rewrite the identity of \Cref{cor:fair_price_of_asset} to get
	\begin{align}
	\begin{split}
		\assetsAfter{\riskhorizon}
		- \E^{\mQ}*{
			\sum_{t=\riskhorizon+1}^{\timehorizon} \disc{\riskhorizon}{t}
			\cf{t}^A
			| \cF_\riskhorizon
		}
		=
		\E^{\mQ}*{
			\sum_{t=\riskhorizon+1}^{\timehorizon} \disc{\riskhorizon}{t}
			\cf{t}^B
			+ \disc{\riskhorizon}{\timehorizon} \assetsAfter{\timehorizon}
			| \cF_\riskhorizon
		}
	\end{split}
	\end{align}
\end{rem}

Thus, the equality of the direct and indirect available capital follows directly from the preceding remark.
\begin{thm}\label[thm]{thm:equality_dir_ind}
	In a risk-neutral abstract ALM model, the direct available capital equals the indirect available capital for all $0 \leq \riskhorizon < \timehorizon$, i.e.,
	\begin{align}
		\ACdir_\riskhorizon(\timehorizon)
		=
		\ACind_\riskhorizon(\timehorizon),
	\end{align}
	where
	\begin{align}
		\ACdir_\riskhorizon(\timehorizon) &= \assetsAfter{\riskhorizon} - \E^\mQ[\bigg]{
			\sum_{t = \riskhorizon + 1}^{\timehorizon} \disc{\riskhorizon}{t} \phcf{t}	+ \disc{\riskhorizon}{\timehorizon} \liabsAfter{\timehorizon}
			| \cF_\riskhorizon
		}, \label{eq:direct_method} \\
		\ACind_\riskhorizon(\timehorizon) &= \E^\mQ[\bigg]{
			\sum_{t = \riskhorizon + 1}^{\timehorizon} \disc{\riskhorizon}{t} \shcf{t}
			+ \disc{\riskhorizon}{\timehorizon} \freefundAfter{\timehorizon}
			| \cF_\riskhorizon
		}.\label{eq:indirect_method}
	\end{align}
\end{thm}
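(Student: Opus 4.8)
The plan is to obtain the identity as essentially a one-line consequence of \Cref{rem:equality_of_assets_equation}, so no substantive new work is needed beyond correctly matching up the notation. First I would observe that the asset dynamics of the abstract ALM model, $\assetsBefore{t} = \assetsAfter{t-1}\assetsReturn{t}$ together with $\assetsAfter{t} = \assetsBefore{t} - \shcf{t} - \phcf{t}$, are precisely of the form $\assetsAfter{t} = \assetsBefore{t} - \cf{t}$ required by \Cref{cor:fair_price_of_asset}, with total cash flow $\cf{t} \coloneqq \shcf{t} + \phcf{t}$; and that in a risk-neutral abstract ALM model the return $\assetsReturn{t}$ satisfies the martingale property \eqref{eq:martingale_property}. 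Hence the hypotheses of \Cref{cor:fair_price_of_asset}, and therefore those of \Cref{rem:equality_of_assets_equation}, are met (implicitly assuming the usual integrability of the cash flows and of the terminal value so that all conditional expectations are well defined).

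Next I would apply \Cref{rem:equality_of_assets_equation} with the natural splitting $\cf{t}^A \coloneqq \phcf{t}$ (the policyholder part) and $\cf{t}^B \coloneqq \shcf{t}$ (the shareholder part), which yields, for every $0 \leq \riskhorizon < \timehorizon$,
\[
\assetsAfter{\riskhorizon} - \E^{\mQ}*{\sum_{t=\riskhorizon+1}^{\timehorizon} \disc{\riskhorizon}{t}\, \phcf{t} | \cF_\riskhorizon}
= \E^{\mQ}*{\sum_{t=\riskhorizon+1}^{\timehorizon} \disc{\riskhorizon}{t}\, \shcf{t} + \disc{\riskhorizon}{\timehorizon} \assetsAfter{\timehorizon} | \cF_\riskhorizon}.
\]
The final step is just bookkeeping of the terminal liability term. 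Subtracting $\E^{\mQ}*{\disc{\riskhorizon}{\timehorizon} \liabsAfter{\timehorizon} | \cF_\riskhorizon}$ from the left-hand side turns it into $\ACdir_\riskhorizon(\timehorizon)$ by definition \eqref{eq:direct_method}; on the right-hand side, inserting $\freefundAfter{\timehorizon} = \assetsAfter{\timehorizon} - \liabsAfter{\timehorizon}$ and subtracting the same quantity turns it into $\ACind_\riskhorizon(\timehorizon)$ by definition \eqref{eq:indirect_method}. Since an equal $\cF_\riskhorizon$-conditional expectation is removed from both sides of an equality, $\ACdir_\riskhorizon(\timehorizon) = \ACind_\riskhorizon(\timehorizon)$ follows.

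I do not expect a genuine obstacle here: all of the mathematical content sits in \Cref{cor:fair_price_of_asset}, whose induction over the horizon $\timehorizon$ uses only the martingale property of $\assetsReturn{t}$ and the tower property of conditional expectations, while the theorem itself is merely that corollary re-expressed after decomposing the total cash flow $\cf{t}$ into its regulatory "direct" and "indirect" components. The one point that requires care is not dropping or double-counting the terminal term $\disc{\riskhorizon}{\timehorizon}\liabsAfter{\timehorizon}$ when moving between $\assetsAfter{\timehorizon}$ and $\freefundAfter{\timehorizon}$; writing both sides out explicitly before and after the subtraction makes this transparent.
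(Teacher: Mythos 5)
Your proposal is correct and follows exactly the paper's own route: the paper likewise proves the theorem by setting $\cf{t}^A = \phcf{t}$ and $\cf{t}^B = \shcf{t}$ in \Cref{rem:equality_of_assets_equation} and then using $\freefundAfter{\timehorizon} = \assetsAfter{\timehorizon} - \liabsAfter{\timehorizon}$ to account for the terminal liability term. Your version merely writes out the bookkeeping of subtracting $\E^{\mQ}*{\disc{\riskhorizon}{\timehorizon} \liabsAfter{\timehorizon} | \cF_\riskhorizon}$ from both sides more explicitly than the paper's one-line proof does.
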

\begin{proof}
	Set $\cf{t}^A = \phcf{t}$ and $\cf{t}^B = \shcf{t}$, recall that $\freefundAfter{t} = \assetsAfter{t} - \liabsAfter{t}$, and apply \Cref{rem:equality_of_assets_equation}.
\end{proof}

In summary, we have proved that the direct and indirect estimators share the same expectation: the available capital. This result makes the term `available capital' unambiguous, regardless of the method used. However, the estimators themselves are not equivalent. To illustrate the ambiguity in choosing between them, we begin with a stylized construction that illustrates why neither estimator is a priori superior. We then proceed in \Cref{sec:numerical_study} to evaluate their performance in more empirically relevant life insurance models.

\begin{remark}\label[remark]{rem:edge_cases_of_estimators_variance}
	Assume a risk-neutral abstract ALM model. Assume that $\phcf{t} = \shcf{t} = 0$ for all $t$, i.e., the model does not pay out any cash flows during the simulation. Then $\assets{t}{} \coloneqq \assetsAfter{t} = \assetsBefore{t}$ and $\liabs{t}{} \coloneqq \liabsAfter{t} = \liabsBefore{t}$. Let the discount factor $\disc{\tau}{t}$ be non-stochastic for all $t$. We can observe:
	\begin{itemize}
		\item If $\liabs{t}{}$ is non-stochastic, e.g., if the guaranteed interest rate determining the policyholder's returns is higher than any observed market returns, then $\Var( \ACdirest ) = 0$ and $\Var( \ACindest ) > 0$.
		\item If $\liabs{t}{} = \assets{t}{}$ for all $t$, e.g., if the full market return is passed on to the policyholders, then $\Var( \ACdirest ) > 0$ and $\Var( \ACindest ) = 0$.
	\end{itemize}

	While these idealized cases are not practically viable, they represent two extremes where one method is clearly superior. In practice, firms operate between these poles, making the optimal choice of estimator a priori unclear.
\end{remark}


\subsection{The mixed estimator}\label{subsec:mixed_estimator}

Using the same trick of risk-neutral equality from \Cref{cor:fair_price_of_asset}, we can write down the family of mixed estimators that also have the same expectation but slightly differ in distribution compared to the direct and indirect estimator.

\begin{defi}
	In an abstract ALM model, we define the mixed estimator for the available capital and correspondingly the mixed present value for a subset of time steps $\timeInMixed \subseteq \{1, 2, \dots, \timehorizon \}$ as
	\begin{align}
		\begin{split}
		\ACmix_{\riskhorizon}(\timehorizon; \timeInMixed)
		\coloneqq&
		\assetsAfter{\riskhorizon}
		- \E^{\mQ}*{
			\sum_{t \not\in \timeInMixed} \left[ \disc{\riskhorizon}{t} \phcf{t} \right]
			+ \disc{\riskhorizon}{\timehorizon} \liabsAfter{\timehorizon}
			| \cF_\riskhorizon
		} \\
		&- \E^\mQ[\bigg]{
			\sum_{t \in \timeInMixed} \left[ \disc{\riskhorizon}{t} \left( - \assetsAfter{t} - \shcf{t} \right)
			+ \disc{\riskhorizon}{t-1} \assetsAfter{t-1} \right]
			| \cF_\riskhorizon
		},
		\end{split} \\
		\PVmix_{\riskhorizon}(T; \timeInMixed)
		\coloneqq&
		\assetsAfter{\riskhorizon}
		- \sum_{t \not\in \timeInMixed} \left[ \disc{\riskhorizon}{t} \phcf{t} \right]
		+ \disc{\riskhorizon}{\timehorizon} \liabsAfter{\timehorizon}
		- \sum_{t \in \timeInMixed} \left[ \disc{\riskhorizon}{t} \left( - \assetsAfter{t} - \shcf{t} \right)
		+ \disc{\riskhorizon}{t-1} \assetsAfter{t-1} \right]. \label{eq:pvmix_def}
	\end{align}
\end{defi}

\begin{cor}
	Let $\timeInMixed_{\mathrm{all}} = \{1,2,\dots,\timehorizon\}$ and $\varnothing$ be the empty set.
	In a risk-neutral abstract ALM model, the following equalities hold:
	\begin{align}
		\PVmix_{\riskhorizon}(\timehorizon; \varnothing)
		&=
		\PVdir_{\riskhorizon}(\timehorizon), \\
		\PVmix_{\riskhorizon}(\timehorizon; \timeInMixed_{\mathrm{all}})
		&=
		\PVind_{\riskhorizon}(\timehorizon).
	\end{align}
\end{cor}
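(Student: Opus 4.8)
The plan is to obtain both identities by simply unwinding the definition of $\PVmix_\riskhorizon(\timehorizon;\timeInMixed)$ in \eqref{eq:pvmix_def} for the two extreme choices $\timeInMixed=\varnothing$ and $\timeInMixed=\timeInMixed_{\mathrm{all}}$. The statement is purely pathwise, so no probabilistic input — not even the martingale property \eqref{eq:martingale_property} or an expectation — is needed; the hypothesis of risk-neutrality is only inherited from the surrounding development.

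First I would take $\timeInMixed=\varnothing$. Then the index set $\{t\in\timeInMixed\}$ is empty, so the second sum in \eqref{eq:pvmix_def} disappears, while the complementary index set $\{t\notin\timeInMixed\}$ is all of $\{\riskhorizon+1,\dots,\timehorizon\}$. What remains is exactly the defining expression for $\PVdir_\riskhorizon(\timehorizon)$, which settles the first equality immediately.

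For $\timeInMixed=\timeInMixed_{\mathrm{all}}=\{\riskhorizon+1,\dots,\timehorizon\}$ the roles are reversed: the $\phcf{\cdot}$-sum over the complement is empty, and only the $\liabsAfter{\timehorizon}$-term together with the full sum over $\timeInMixed$ survives. The key manipulation is to split each summand $\disc{\riskhorizon}{t}(-\assetsAfter{t}-\shcf{t})+\disc{\riskhorizon}{t-1}\assetsAfter{t-1}$ into the shareholder-cash-flow contribution $\disc{\riskhorizon}{t}\shcf{t}$ and the telescoping pair $\disc{\riskhorizon}{t-1}\assetsAfter{t-1}-\disc{\riskhorizon}{t}\assetsAfter{t}$. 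Writing $a_t\coloneqq\disc{\riskhorizon}{t}\assetsAfter{t}$, the latter sums to $a_\riskhorizon-a_\timehorizon=\assetsAfter{\riskhorizon}-\disc{\riskhorizon}{\timehorizon}\assetsAfter{\timehorizon}$, using $\disc{\riskhorizon}{\riskhorizon}=\numeraire{\riskhorizon}/\numeraire{\riskhorizon}=1$. Substituting back, the two copies of $\assetsAfter{\riskhorizon}$ cancel, and $\disc{\riskhorizon}{\timehorizon}\assetsAfter{\timehorizon}-\disc{\riskhorizon}{\timehorizon}\liabsAfter{\timehorizon}=\disc{\riskhorizon}{\timehorizon}\freefundAfter{\timehorizon}$ by $\freefundAfter{\timehorizon}=\assetsAfter{\timehorizon}-\liabsAfter{\timehorizon}$; what is left is $\sum_{t=\riskhorizon+1}^{\timehorizon}\disc{\riskhorizon}{t}\shcf{t}+\disc{\riskhorizon}{\timehorizon}\freefundAfter{\timehorizon}=\PVind_\riskhorizon(\timehorizon)$.

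There is no genuine obstacle here; the argument is bookkeeping. The only points demanding care are (i) reading the two index sums in \eqref{eq:pvmix_def} as running over $t\in\{\riskhorizon+1,\dots,\timehorizon\}$, so that $\timeInMixed$ and its complement partition this set; (ii) tracking the overall minus sign in front of the $\timeInMixed$-sum when redistributing the terms; and (iii) correctly identifying the boundary term $\disc{\riskhorizon}{\riskhorizon}=1$ of the telescoping sum. Taking the $\mQ$-conditional expectation given $\cF_\riskhorizon$ of these two pathwise identities then reproduces $\ACmix_\riskhorizon(\timehorizon;\varnothing)=\ACdir_\riskhorizon(\timehorizon)$ and $\ACmix_\riskhorizon(\timehorizon;\timeInMixed_{\mathrm{all}})=\ACind_\riskhorizon(\timehorizon)$, identifying the direct and indirect estimators as the two endpoints of the mixed family and thereby supplying the ingredient exploited in the control-variate construction of \Cref{sec:control_variates}.
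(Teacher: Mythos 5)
Your proof is correct and follows essentially the same route as the paper's: for $\timeInMixed=\varnothing$ the second sum in \eqref{eq:pvmix_def} vanishes and the direct present value remains, while for $\timeInMixed=\timeInMixed_{\mathrm{all}}$ the first sum vanishes and the asset terms $\disc{\riskhorizon}{t-1}\assetsAfter{t-1}-\disc{\riskhorizon}{t}\assetsAfter{t}$ telescope to $\assetsAfter{\riskhorizon}-\disc{\riskhorizon}{\timehorizon}\assetsAfter{\timehorizon}$, after which the free-fund identity $\freefundAfter{\timehorizon}=\assetsAfter{\timehorizon}-\liabsAfter{\timehorizon}$ yields $\PVind_{\riskhorizon}(\timehorizon)$. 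You are in fact slightly more explicit than the paper, which merely asserts the telescoped form, whereas you spell out the cancellation and the boundary term $\disc{\riskhorizon}{\riskhorizon}=1$.
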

\begin{proof}
	For $\timeInMixed = \varnothing$, the second sum in \eqref{eq:pvmix_def} vanishes. The remainder is then identical to $\PVdir_\riskhorizon(\timehorizon)$. For $\timeInMixed = \timeInMixed_{\mathrm{all}}$, the first sum vanishes. We are left with
	\begin{align}
		\PVmix_{\riskhorizon}(\timehorizon; \timeInMixed_{\mathrm{all}})
		&=
		\assetsAfter{\riskhorizon} + \disc{\riskhorizon}{\timehorizon} \liabsAfter{\timehorizon}
		- \left[ \sum_{t=\riskhorizon+1}^{\timehorizon} \disc{\riskhorizon}{t} \left( - \assetsAfter{t} - \shcf{t} \right)
		+ \disc{\riskhorizon}{t-1} \assetsAfter{t-1} \right] \\
		&=
		\assetsAfter{\riskhorizon} + \disc{\riskhorizon}{\timehorizon} \liabsAfter{\timehorizon}
		- \left[
			\assetsAfter{\riskhorizon}
			- \sum_{t=\riskhorizon+1}^{\timehorizon} \disc{\riskhorizon}{\timehorizon} \shcf{t}
			- \disc{\riskhorizon}{\timehorizon} \assetsAfter{\timehorizon}
		\right] \\
		&= \PVind_{\riskhorizon}(\timehorizon)
	\end{align}
\end{proof}

\begin{cor}\label[cor]{cor:equality_of_mixed_ac}
	In a risk-neutral abstract ALM model, it holds for all $\timeInMixed \subseteq \{1, 2, \dots, \timehorizon \}$ that
	\begin{align}
		\ACmix_{\riskhorizon}(T; \timeInMixed)
		= \ACdir_{\riskhorizon}(T)
		= \ACind_{\riskhorizon}(T).
	\end{align}
\end{cor}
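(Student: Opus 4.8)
The plan is to read the corollary off \Cref{thm:equality_dir_ind} together with the martingale property \eqref{eq:martingale_property}, without redoing any induction. Since \Cref{thm:equality_dir_ind} already gives $\ACdir_{\riskhorizon}(\timehorizon) = \ACind_{\riskhorizon}(\timehorizon)$, it suffices to establish $\ACmix_{\riskhorizon}(\timehorizon; \timeInMixed) = \ACdir_{\riskhorizon}(\timehorizon)$ for an arbitrary index set $\timeInMixed$. Throughout I treat the sums $\sum_{t \in \timeInMixed}$ and $\sum_{t \notin \timeInMixed}$ as running over $t \in \{\riskhorizon+1, \dots, \timehorizon\}$, so without loss of generality $\timeInMixed \subseteq \{\riskhorizon+1, \dots, \timehorizon\}$.

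First I would subtract the two definitions. The leading $\assetsAfter{\riskhorizon}$, the terminal $\disc{\riskhorizon}{\timehorizon}\liabsAfter{\timehorizon}$, and the policyholder cash flows indexed by $t \notin \timeInMixed$ all cancel, and what survives is
\[
\ACmix_{\riskhorizon}(\timehorizon; \timeInMixed) - \ACdir_{\riskhorizon}(\timehorizon) = \E^\mQ*{\sum_{t \in \timeInMixed} \left( \disc{\riskhorizon}{t}\left(\phcf{t} + \shcf{t} + \assetsAfter{t}\right) - \disc{\riskhorizon}{t-1}\assetsAfter{t-1} \right) | \cF_\riskhorizon}.
\]
Invoking the budget identity $\assetsAfter{t} = \assetsBefore{t} - \shcf{t} - \phcf{t}$ together with $\assetsBefore{t} = \assetsAfter{t-1}\assetsReturn{t}$ and the numeraire splitting $\disc{\riskhorizon}{t} = \disc{\riskhorizon}{t-1}\disc{t-1}{t}$, each bracket becomes $\disc{\riskhorizon}{t-1}\left(\disc{t-1}{t}\assetsReturn{t} - 1\right)\assetsAfter{t-1}$, i.e. a discounted increment of the asset gains process.

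Then I would annihilate each summand on its own by the tower property: for $t \in \timeInMixed$ we have $\riskhorizon \le t-1$, and $\disc{\riskhorizon}{t-1}\assetsAfter{t-1}$ is $\cF_{t-1}$-measurable, so conditioning first on $\cF_{t-1}$ gives
\[
\E^\mQ*{\disc{\riskhorizon}{t-1}\left(\disc{t-1}{t}\assetsReturn{t} - 1\right)\assetsAfter{t-1} | \cF_\riskhorizon} = \E^\mQ*{\disc{\riskhorizon}{t-1}\assetsAfter{t-1}\left(\E^\mQ*{\disc{t-1}{t}\assetsReturn{t} | \cF_{t-1}} - 1\right) | \cF_\riskhorizon} = 0
\]
by \eqref{eq:martingale_property}. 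Summing over $t \in \timeInMixed$ shows $\ACmix_{\riskhorizon}(\timehorizon; \timeInMixed) = \ACdir_{\riskhorizon}(\timehorizon)$, and chaining with \Cref{thm:equality_dir_ind} yields the stated triple equality. Equivalently, the per-$t$ identity $\E^\mQ*{\disc{\riskhorizon}{t}\assetsBefore{t} - \disc{\riskhorizon}{t-1}\assetsAfter{t-1} | \cF_\riskhorizon} = 0$ is exactly the difference of two applications of \Cref{cor:fair_price_of_asset}, at horizons $t$ and $t-1$ with $\cf{s} = \phcf{s} + \shcf{s}$, so one may reuse the earlier induction instead of re-deriving the martingale increment.

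I do not expect a genuine obstacle here: the only step needing a moment's care is pulling $\disc{\riskhorizon}{t-1}$ and $\assetsAfter{t-1}$ out of the inner $\cF_{t-1}$-conditional expectation, which rests on their $\cF_{t-1}$-measurability and on the numeraire identity $\disc{\riskhorizon}{t} = \disc{\riskhorizon}{t-1}\disc{t-1}{t}$. Once that is granted the martingale property does the rest, since by construction each block $t \in \timeInMixed$ of the mixed estimator merely trades one policyholder cash flow for a telescoping asset increment whose $\cF_\riskhorizon$-conditional mean is zero.
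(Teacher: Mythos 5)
Your proof is correct and follows essentially the same route as the paper's: both rewrite each term $t \in \timeInMixed$ using the budget identity $\assetsAfter{t} = \assetsBefore{t} - \shcf{t} - \phcf{t}$, split the numeraire as $\disc{\riskhorizon}{t} = \disc{\riskhorizon}{t-1}\disc{t-1}{t}$, and kill the resulting increment with the tower property and the martingale property \eqref{eq:martingale_property}; you merely phrase it as showing the difference $\ACmix_{\riskhorizon} - \ACdir_{\riskhorizon}$ vanishes termwise, whereas the paper substitutes the identity for $\E^{\mQ}*{\disc{\riskhorizon}{t}\phcf{t} | \cF_\riskhorizon}$ directly into the definition of $\ACdir_{\riskhorizon}$. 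Your closing observation that each increment is the difference of two applications of \Cref{cor:fair_price_of_asset} is a valid alternative but not needed.
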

\begin{proof}
	We are done if we show that the direct and mixed available capital are equal. The direct available capital is defined as
	\begin{align}
	\begin{split}
		\ACdir_\riskhorizon(\timehorizon) = \assetsAfter{\riskhorizon} - \E^\mQ[\bigg]{
			\sum_{t = \riskhorizon + 1}^{\timehorizon} \disc{\riskhorizon}{t} \phcf{t}	+ \disc{\riskhorizon}{\timehorizon} \liabsAfter{\timehorizon}
			| \cF_\riskhorizon
		}.
	\end{split}
	\end{align}
	We use $\assetsAfter{t} = \assetsBefore{t} - \shcf{t} - \phcf{t}$ \eqref{eq:assets_cash_flows} to rewrite the expected discounted policyholder cash flow for all $t$ via
	\begin{align}
	\begin{split}
		\E^{\mQ}*{\disc{\riskhorizon}{t} \phcf{t}}
		&=
		\E^{\mQ}*{
			\disc{\riskhorizon}{t}
			(-\assetsAfter{t} - \shcf{t} + \assetsAfter{t-1} \assetsReturn{t})
			| \cF_\riskhorizon
		} \\
		&=
		\E^{\mQ}*{
			\disc{\riskhorizon}{t}
			(-\assetsAfter{t} - \shcf{t})
			+ \disc{\riskhorizon}{t-1} \disc{t-1}{t} \assetsAfter{t-1} \assetsReturn{t}
			| \cF_\riskhorizon
		}.
	\end{split}
	\end{align}
	Again, we use the tower property and the risk-neutral property of the assets to get
	\begin{align}
	\begin{split}
		\E^{\mQ}*{
			\disc{\riskhorizon}{t-1} \disc{t-1}{t} \assetsAfter{t-1} \assetsReturn{t}
			| \cF_\riskhorizon
		}
		&=
		\E^{\mQ}*{
			\disc{\riskhorizon}{t-1}
			\E^{\mQ}*{
				\disc{t-1}{t} \assetsAfter{t-1} \assetsReturn{t}
				| \cF_{t-1}
			}
			| \cF_\riskhorizon
		} \\
		&=
		\E^{\mQ}*{
			\disc{\riskhorizon}{t-1} \assetsAfter{t-1}
			| \cF_\riskhorizon
		}.
	\end{split}
	\end{align}
	Starting from the direct available capital, we can now just apply the above identity to $\phcf{t}$ for all $t \in \timeInMixed$ and get the mixed representation.
\end{proof}

\section{Control variates for estimating the available capital}\label{sec:control_variates}

In this section, we briefly introduce the concept of control variates for a single and multiple controls for our use case. We follow the presentation by \cite{glasserman_MonteCarloMethods_2010} closely, using random variables and unknown statistics for the single control to outline the idea, and observations and estimators for the multiple controls case as the version that can be applied in practice. Assume for this presentation that the expectations are appropriately conditioned if necessary.

\paragraph{Single control.}
For a given target variable $\PV$, our objective is the estimation of its expectation, the available capital.
If we have a second variable $\PV^c$ with the same expectation, $\E^{}*{\PV - \PV^c} = 0$ holds.
We define this difference as
\begin{equation}
	C \coloneqq \PV - \PV^c.
\end{equation}

Because the expectation of $C$ is known, we may now use it to construct a control variate estimator. Let $b \in \mathbb{R}$ be fixed, then the control variate present value is defined as
\begin{align}
	\PVcv(b)
	&\coloneqq \PV - b (C - \E^{}*{C}) \\
	&= \PV - b (\PV - \PV^c).
\end{align}
We know that for any $b$ we have $\E^{}*{\PVcv(b)} = \AC$. The coefficient $b$ is estimated by minimizing the variance of $\PVcv(b)$ and inserting the respective estimators for the variance. Note that, $b(C - \E^{}*{C})$ can be thought of as the projection of $\PV - \AC$ onto $C - \E^{}*{C}$, i.e., the remaining randomness of the control variate estimator is exactly the part that is orthogonal (in this case uncorrelated) to the chosen control.

\paragraph{Multiple controls.}
For a vector of controls $C_i = (C_i^{(1)}, \dots, C_i^{(d)})$ with known vector of expectations and $i=1,\dots,n$ realizations, we can write the estimated covariance matrix of target variable and controls as
\begin{align}
	\begin{pmatrix}
		S_C & S_{C, \PV} \\
		S_{C, \PV}^T & \hat{\sigma}_{\PV}
	\end{pmatrix},
\end{align}
where $S_C$ denotes the sample covariance matrix of the controls, $\hat{\sigma}_{\PV}$ denotes the sample covariance of the target variable and $S_{C, \PV}$ is the remaining cross-covariance vector.
For fixed $b \in \mathbb{R}^d$, the control variate estimator is
\begin{align}
	\overline{\PV}(b)
	=
	\overline{\PV} - b^T (\overline{\PV} - \overline{\PV^c}),
\end{align}
where the bar indicates the sample mean. Note that later, these $\overline{\PV}$ and $\overline{\PV^c}$ will be using the same underlying realizations of random variables, i.e., $\PV_i$ and $\PV^c_i$ will be computed from the same underlying simulation.
The estimator for the optimal $b^*$ is then given by
\begin{align}
	\hat{b} = S_C^{-1} S_{C, \PV}.
\end{align}
Two options are popular to efficiently estimate $b$. Either we use all of the data and then use the same data to estimate the final present value, thereby introducing a small bias in the estimation, or we split our data into a part for estimating $b^*$ once, and then use all other data to estimate the present value.
Note that, the above equations highlight that the control variate estimator is essentially a regression over the controls.
The resulting control variate estimator we employ is given by $\PVcv(\hat{b})$.
For $R^2$ being the coefficient of determination known from regression,
\begin{align}
	R^2 = \frac{ S_{C, \PV}^T S_{C}^{-1} S_{C, \PV} } { \hat{\sigma}_{\PV}^2 },
\end{align}
the $\PVcv(\hat{b})$ has a variance of $(1 - R^2) \sigma_{\PV}^2$. Hence, we can assess the decrease in variance by evaluating $R^2$. We will need this in \Cref{sec:numerical_study}.

%
%
%

\section{Extension to models with frictional costs}
\label{sec:leakage}

The framework developed thus far assumes that all cash flows originating from the asset portfolio are fully allocated to either policyholders by $\phcf{t}$ or shareholders by $\shcf{t}$. However, as heavily emphasized in the literature, \emph{frictional costs} play a crucial role in insurance markets, fundamentally driving capital management and pricing decisions (see, e.g., \cite{bauer_FinancialPricingInsurance_2013}, \cite{albrecher_AssetliabilityManagementLongterm_2018}). The economic impact of these market imperfections is particularly pronounced when covering long-term insurance liabilities (cf.\ \cite{albrecher_AssetliabilityManagementLongterm_2018}), as the necessity to hold capital over extended horizons amplifies the loss of frictions. These frictions can arise from various sources, such as corporate taxes, agency costs, investment management fees, or administrative burdens (cf.\ \cite[Sec.~2]{americanacademyofactuaries_FairValuationInsurance_2002}), and describe all costs that are directly deducted from the asset base and not paid out to either policyholders or shareholders. In this section, we demonstrate how our framework can be seamlessly extended to incorporate this economically important feature, thereby preserving the equality of the estimators and the validity of the proposed variance reduction techniques.

Let us introduce $\cfleak{t}$ as the cash flow representing frictions at time $t$. The asset dynamics after cash flows, as described in \eqref{eq:assets_cash_flows}, must be updated to:
\begin{equation}
	\label{eq:asset_leakage}
	\assetsAfter{t} = \assetsBefore{t} - \phcf{t} - \shcf{t} - \cfleak{t}.
\end{equation}
This impacts the relationship between the direct and indirect estimators as originally defined. If we retrace the proof of \Cref{thm:equality_dir_ind}, the fundamental asset valuation principle from \Cref{cor:fair_price_of_asset} now accounts for the total cash outflow $\cf{t} = \phcf{t} + \shcf{t} + \cfleak{t}$. This leads to the following relationship between the original estimators:
\begin{equation}
	\ACdir_\riskhorizon(\timehorizon) = \ACind_\riskhorizon(\timehorizon) + \E^\mQ*{
		\sum_{t=\riskhorizon+1}^{\timehorizon} \beta_{\riskhorizon}(t) \cfleak{t}
		| \cF_{\riskhorizon}
	}.
\end{equation}
Hence, the original direct and indirect methods no longer converge to the same value. Their expectations differ by the risk-neutral present value of all future leakage cash flows.

Equivalence can be restored by grouping all non-policyholder cash flows on the original shareholder side of the valuation. We achieve this by defining an adjusted indirect present value which explicitly accounts for these leakage costs:
\begin{equation}
	\label{eq:ind_leakage}
	\PVindleak_\riskhorizon(\timehorizon) \coloneqq \sum_{t=\riskhorizon+1}^{\timehorizon} \beta_{\riskhorizon}(t)(\shcf{t} + \cfleak{t}) + \beta_{\riskhorizon}(\timehorizon)F_T^+.
\end{equation}
The corresponding adjusted indirect available capital, $\ACindleak_\riskhorizon(\timehorizon) = \mathbb{E}^\mQ[\PVindleak_\riskhorizon(\timehorizon) \mid \cF_{\riskhorizon}]$, is now, by construction, equal to the direct available capital $\ACdir_\riskhorizon(\timehorizon)$.

Ultimately, the theoretical framework presented in this paper remains intact by simply substituting $\cfsh{t}$ by $\cfsh{t} + \cfleak{t}$. While the downside is that the resulting leak-including indirect estimator no longer perfectly matches the idealized economical derivation presented in \Cref{sec:direct_and_indirect_method}, incorporating these frictions provides a strictly more realistic and necessary basis for practical implementation in insurance markets.

\section{Numerical study}\label{sec:numerical_study}

We study the performance of direct, indirect, and mixed estimator for the available capital on two different models.
We denote the first model, presented in \cite{bauer_RiskneutralValuationParticipating_2006} and further used in \cite{zaglauer_RiskneutralValuationParticipating_2008} and \cite{bauer_CalculationSolvencyCapital_2012a}, as Bauer's model. It is a simple base line model for the asset-liability management of a life insurer with two main cases: the MUST and the IS case, where the first implements what life insurers have to share with their policyholders at a minimum, and the second has a more realistic profit participation for the policyholders.
Another ALM model we test our methods on is openIRM. It is a more realistic representation of an internal risk model under the Solvency II regulatory framework.

\subsection{Bauer's model}

\cite{bauer_RiskneutralValuationParticipating_2006} present a simplified model of a German life insurer with two variants for their policyholder cash flows, which was extended by \cite{zaglauer_RiskneutralValuationParticipating_2008} to include a stochastic interest rate. The first MUST case represents an insurer giving the regulatory minimum of participation to policyholders. The second IS case represents a more realistic version where the amount of participation the policyholders get depends on past market performance and a target reserve rate. We use the same notation as in the abstract ALM model for similar concepts.

To model the asset process in Bauer's model, we follow \cite{bauer_CalculationSolvencyCapital_2012a} and implement a generalized Black-Scholes model with stochastic interest rate $r_t$ described by the Vasicek model, i.e., under the risk-neutral measure $\mQ$ we have
\begin{align}
\begin{split}
	d\assets{t}{}
	&=
	\assets{t}{} \left(
		r_t dt
		+ \shortrateAssetCorr \assetVola d W^r_t
		+ \sqrt{1 - \shortrateAssetCorr^2} \assetVola d W^{\assets{}{}}_t,
	\right)
	\;\; \assets{0}{} > 0, \\
	d r_t
	&=
	\shortrateReversionSpeed (\shortrateMeanLevelRiskNeutral - r_t) dt
	+ \shortrateVola d W^r_t,
	\hspace{2.8cm}\; r_0 > 0,
\end{split}
\end{align}
where $\shortrateMeanLevelRiskNeutral = \shortrateMeanLevel - \marketPriceOfRisk \shortrateVola / \shortrateReversionSpeed$ and $\marketPriceOfRisk$ is the market price of risk. Because we have to integrate cash flows interacting with the asset process, the final implementation of the asset process is given by
\begin{align}
	\assetsBefore{t} = \assetsAfter{t-1}
	\exp \left(
		\int_{t-1}^{t} r_s ds
		- \frac{\assetVola^2}{2}
		+ \shortrateAssetCorr \assetVola (W^r_t - W^r_{t-1})
		+ \sqrt{1 - \shortrateAssetCorr^2} \assetVola (W^A_t - W^A_{t-1})
	\right).
\end{align}
To ensure the needed risk-neutral assumption, we use the exact implementation for Gaussian short-rate models described in \cite[Sec.~3.3, p.~115]{glasserman_MonteCarloMethods_2010}.

We implement an annual updating scheme of the policy reserves and shareholder cash flows, so let $a$ be the length of one year such that the index $t-a$ now refers to the last annual value of the respective process. For all subannual time steps $t$, we set all cash flows (e.g., dividends) to zero and keep the policy reserves $\liabsBefore{t}$ constant. A visualization of a single representative path for the MUST and IS case can be found in \Cref{fig:comparison_must_and_is_case_model}.

\begin{rem}
	Bauer's model is a risk-neutral abstract ALM model.
\end{rem}

\subsubsection{The MUST case}

\begin{figure}[tb]
	\centering
	\includegraphics[width=1\textwidth]{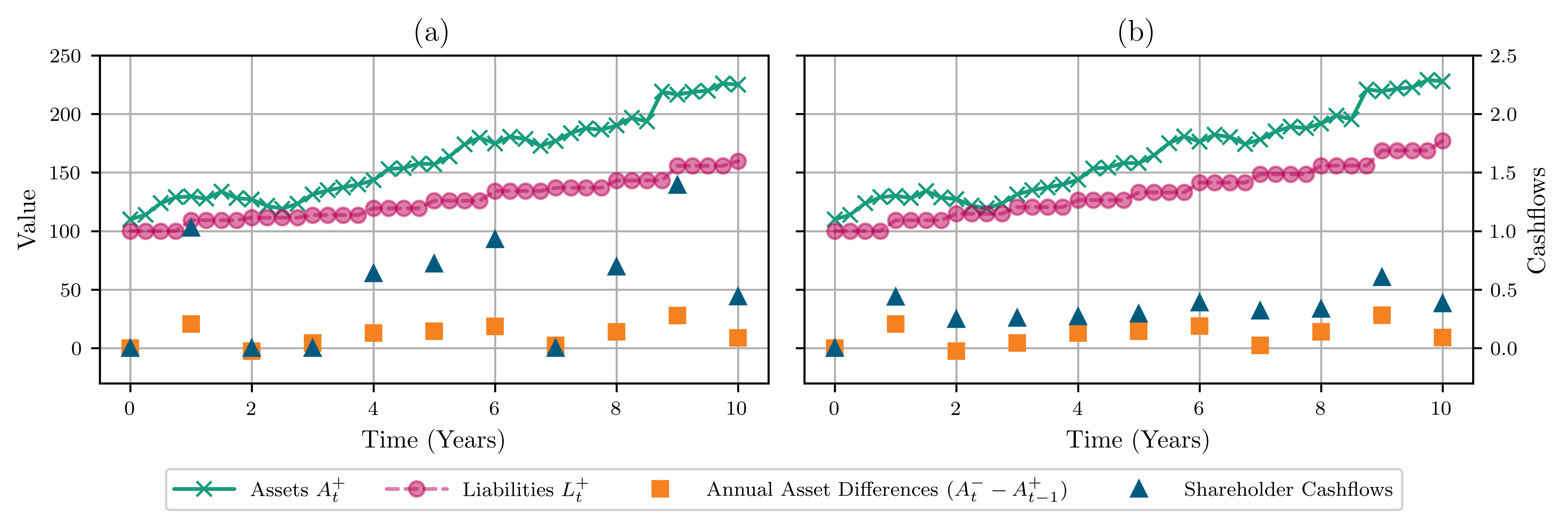}
	\caption{Bauer's model with (a) MUST and (b) IS case for a single representative simulation. Shareholder cash flow is plotted on the right y-axis, with all other quantities corresponding to the left y-axis.}
	\label{fig:comparison_must_and_is_case_model}
\end{figure}

In view of a typical life insurance contract, the policyholders gain a minimum guaranteed interest rate $\guarInt$, hence, $\liabsBefore{t} \geq (1 + \guarInt) \liabsAfter{t-a}$ for all $t$. We assume that at least $\participationRate$ of the earnings on book values have to be credited to the policyholder's accounts and that at least a portion $\earningsFactor$ of any increase in market value has to be identified as earnings on book values. Let
\begin{align}
	\writtenEarnings{t} \coloneqq \participationRate \, \earningsFactor \,
	(\assetsBefore{t} - \assetsAfter{t-a}),
\end{align}
then,
\begin{align}\label{eq:minimum_participation_rate}
	\liabsBefore{t}
	=
	(1 + \guarInt) \liabsAfter{t-a}
	+ \left[
		\writtenEarnings{t}
		- \guarInt \liabsAfter{t-a}
	\right]^+.
\end{align}
The remaining portion is paid out as dividends $\dividends{t}$ to shareholders, i.e.,
\begin{align}
	\dividends{t}
	=
	\begin{cases}
		\earningsFactor (\assetsBefore{t} - \assetsAfter{t-a}) - \writtenEarnings{t}
		&\text{ if }
		\writtenEarnings{t}
		> \guarInt \liabsAfter{t-a}, \\
		\earningsFactor (\assetsBefore{t} - \assetsAfter{t-a})
		- \guarInt \liabsAfter{t-a}
		&\text{ if }
		\writtenEarnings{t}
		\leq \guarInt \liabsAfter{t-a}
		\leq \earningsFactor (\assetsBefore{t} - \assetsAfter{t-a}), \\
		0 &\text{ else.}
	\end{cases}
\end{align}

\subsubsection{The IS case}

Here, a target rate of interest $\targetRate > \guarInt$ is credited to the policy reserves, as long as the so called reserve quota $x_t = (\assetsBefore{t} - \liabsBefore{t}) / \liabsBefore{t}$ stays within a given range $[a, b]$. If the reserve quota leaves this range, the surplus is adjusted. Let $\surplusPortion$ be the portion of any surplus credited to the policyholders. Let $\xi \in \{ \targetRate, \guarInt \}$ and
\begin{align}
	\liabsBefore{t}(\xi) \coloneqq (1 + \xi) \liabsAfter{t},
\end{align}
then the resulting reserve quota would be
\begin{align}
	x_t(\xi) \coloneqq \frac{\assetsBefore{t} - \liabsBefore{t}(\xi)}{\liabsBefore{t}(\xi)},
\end{align}
if the rate of $\xi$ is given to the policyholders.
Now, the amount of policy reserves and dividends are decided based on the following decision rule:

\paragraph{Case 1) $x_t(\targetRate) \in [a, b]$:}
Target rate results in acceptable policy reserves, so
\begin{align}
	\liabsBefore{t} = \liabsBefore{t}(\targetRate),
	\hspace{1cm}
	\dividends{t} = \surplusPortion (\targetRate - \guarInt) \liabsAfter{t-a}.
\end{align}

\paragraph{Case 2) $x_t(\targetRate) < a < x_t(\guarInt)$:}
Company credits the amount such that reserve quota is at lowest acceptable value a. Then,
\begin{align}
\begin{split}
	\liabsBefore{t} &= \liabsBefore{t}(\guarInt)
	+ \frac{1}{1 + a + \surplusPortion} \left[
		\assetsBefore{t} - (1 + \guarInt) (1 + a) \liabsAfter{t-a}
	\right], \\
	\dividends{t} &= \frac{\surplusPortion}{1 + a + \surplusPortion} \left[
		\assetsBefore{t} - (1 + \guarInt) (1 + a) \liabsAfter{t-a}
	\right].
\end{split}
\end{align}

\paragraph{Case 3) $x_t(\guarInt) < a$:}
The resulting reserve level is outside the acceptable range, even when applying the guaranteed interest rate. Therefore, no dividends will be paid,
\begin{align}
	\liabsBefore{t} = \liabsBefore{t}(\guarInt),
	\hspace{1cm}
	\dividends{t} = 0.
\end{align}

\paragraph{Case 4) $x_t(\targetRate) > b$:}
Crediting target rate would exceed acceptable range, so it is capped at the upper limit $b$, i.e.,
\begin{align}
\begin{split}
	\liabsBefore{t} &= \liabsBefore{t}(\guarInt) + \frac{1}{1 + b + \surplusPortion} \left[
		\assetsBefore{t} - (1 + \guarInt) (1 + b) \liabsAfter{t-a}
	\right], \\
	\dividends{t} &= \frac{\surplusPortion}{1 + b + \surplusPortion} \left[
		\assetsBefore{t} - (1 + \guarInt) (1 + b) \liabsAfter{t-a}
	\right].
\end{split}
\end{align}

Finally, the new liabilities have to be at least the minimum participation rate, cf.\ \eqref{eq:minimum_participation_rate}. Hence, if necessary, we overwrite the respective case above and set
\begin{align}
	\liabsBefore{t}
	=
	(1 + \guarInt) \liabsAfter{t-a}
	+ \left[
	\participationRate \, \earningsFactor \,
	(\assetsBefore{t} - \assetsAfter{t-a})
	- \guarInt \liabsAfter{t-a}
	\right]^+,
	\hspace{1cm}
	\dividends{t}
	=
	\surplusPortion \left[
		\participationRate \earningsFactor (\assetsBefore{t} - \assetsAfter{t-a})
		- \guarInt \liabsAfter{t-a}
	\right]^+.
\end{align}

\subsubsection{Optional policyholder cash flows}

We also extended the model to include optional policyholder cash flows, such as constant premium payments and stochastic lapses. However, under our base parameterization, these events were not frequent or material enough to significantly alter the relative variance of the estimators, hence we do not include these extensions for our study. A deeper analysis under more extreme assumptions is left for future research.

\subsection{openIRM}

\cite{wolf_OpenIRMPubliclyAccessible_2025} developed openIRM, a publicly available Internal Risk Model of an artificial life insurer, designed specifically for comparing methods for Solvency Capital Requirement (SCR) estimation.
At its heart lies the available capital estimation which will be our main focus. The model combines an Economic Scenario Generator, which models interest rates and stock dynamics based on the Gaussian 2-Factor (G2++) model, with a cash flow projection model for policies with guaranteed benefits. A key feature of openIRM is its calibration to real market data from 2016 to 2023, allowing it to generate realistic distributions of an insurer's available capital and provide a robust environment for evaluating the performance of our approach.

\begin{rem}
	openIRM is a risk-neutral abstract ALM model.
\end{rem}

\FloatBarrier
\subsection{Comparison of direct and indirect estimator}

We now compare the performance of direct and indirect estimator for the presented models in various parameter settings.
In \Cref{fig:direct_minus_indirect_shortrate_liabs}, the difference of variance of direct and indirect estimator based on $10\,000$ simulations is depicted for guaranteed interest rate $\guarInt$ and initial short rate $r_0$ in the a) MUST and b) IS case. These figures highlight the non-trivial relationship of direct and indirect method depending on the chosen model and parameters.
Importantly, we see in b) that the curve of parameters where both methods have the same variance seems to be non-linear.
In practice, the guaranteed interest rate $\guarInt$ is usually lower than the initial short rate $r_0$. For all of these cases, the indirect method would be the preferred method in the more realistic IS case.
In \Cref{fig:direct_minus_indirect_y_guaranteed_interest_rate_annual}, we see the same difference but for the factor giving the participation of policyholders $\earningsFactor \participationRate$ and again $\guarInt$. Here, the border may even be non-monotonic, as shown in (b).

These figures reinforce the dynamic observed in the stylized setting of \Cref{rem:edge_cases_of_estimators_variance}: the direct estimator exhibits smaller variance when liabilities are less dependent on assets. This occurs, for instance, when the guaranteed interest rate $\guarInt$ exceeds the short rate $\shortrate_t(\omega)$ in most scenarios $\omega$, when policyholder participation in market earnings $\earningsFactor \participationRate$ is low, or when the insurance design allocates more of the surplus to the policyholders like in the IS case.
These findings support the hypothesis that weaker asset-liability coupling favors the direct method, whereas stronger coupling favors the indirect method. However, the non-monotonic boundary in \Cref{fig:direct_minus_indirect_y_guaranteed_interest_rate_annual}b suggests that the amount of coupling is not always immediately apparent from the chosen parameterization.

\begin{figure}[t]
	\begin{subfigure}{0.5\linewidth}
		\centering
		\caption{}
		\includegraphics[width=1\textwidth]{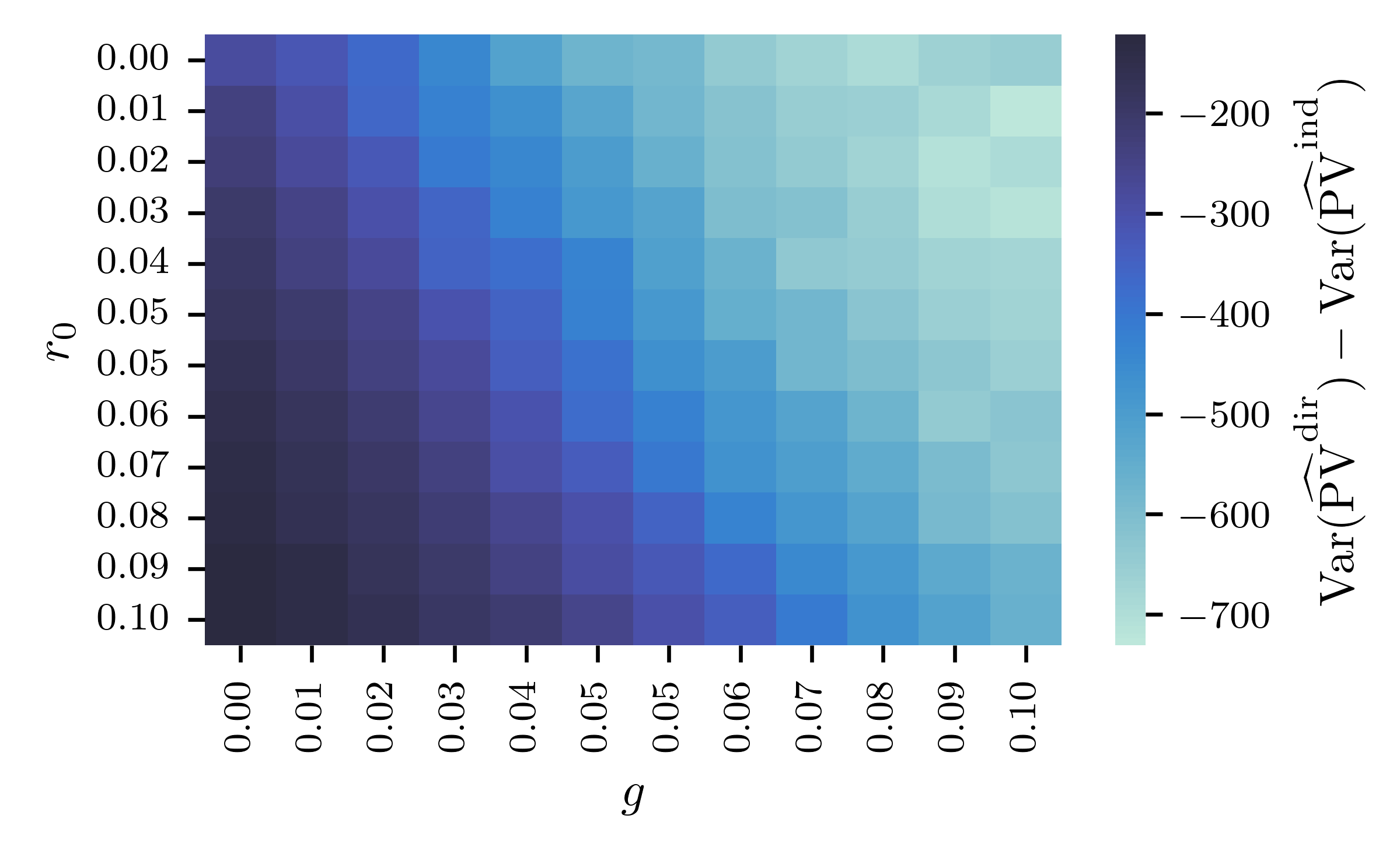}
	\end{subfigure}
	\begin{subfigure}{0.5\linewidth}
		\centering
		\caption{}
		\includegraphics[width=1\textwidth]{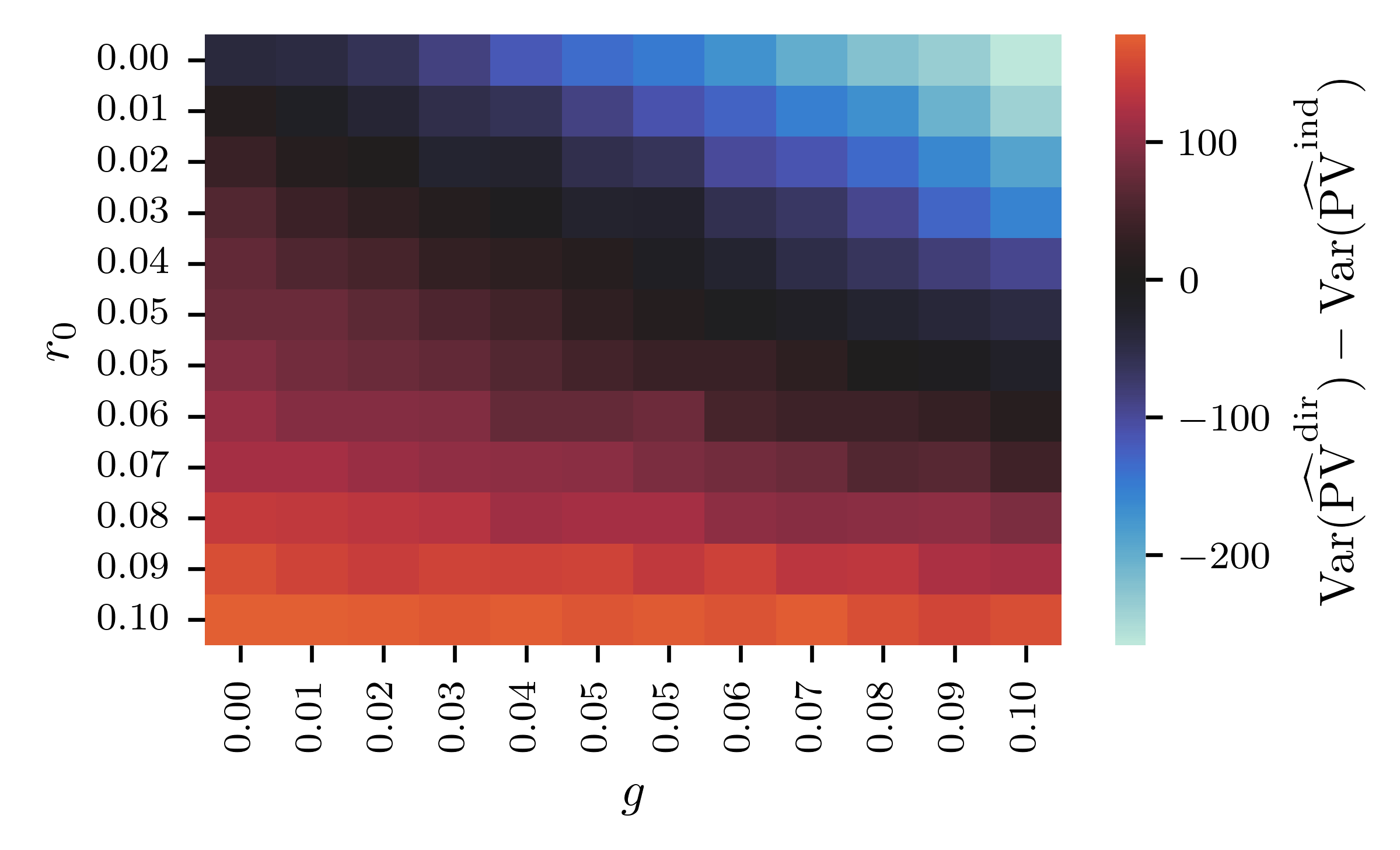}
	\end{subfigure}
	\caption{Difference of variance of direct and indirect estimator for the available capital based on $10\,000$ realizations for varying guaranteed interest rate $\guarInt$ and initial short rate $r_0$ for (a) MUST and (b) IS case.}
	\label{fig:direct_minus_indirect_shortrate_liabs}
\end{figure}

\begin{figure}[t]
	\begin{subfigure}{0.5\linewidth}
		\centering
		\caption{}
		\includegraphics[width=1\textwidth]{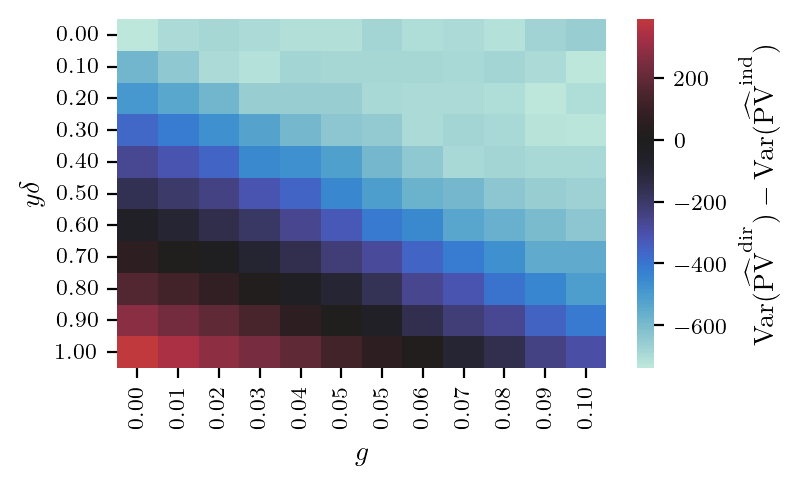}
	\end{subfigure}
	\begin{subfigure}{0.5\linewidth}
		\centering
		\caption{}
		\includegraphics[width=1\textwidth]{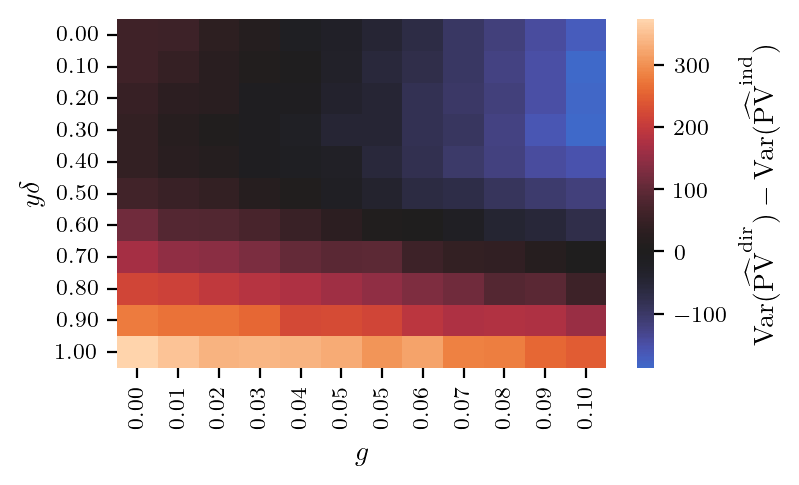}
	\end{subfigure}
	\caption{Difference of variance of direct and indirect estimator for the available capital based on $10\,000$ realizations for varying guaranteed interest rate $\guarInt$ and earnings factor times participation rate $\earningsFactor \participationRate$ for (a) MUST and (b) IS case.}
	\label{fig:direct_minus_indirect_y_guaranteed_interest_rate_annual}
\end{figure}

In \Cref{fig:multiplot_vary_param_MUST} and \Cref{fig:multiplot_vary_param_IS}, we vary a single parameter in Bauer's model for the MUST and IS case, respectively. For each parameter set, we estimate the available capital $10\;000$ times with the results for the direct method shown as the left box plot in teal, and the results for the indirect method shown on the right in red. Their respective mean is shown in front of the boxes as a diamond.
Figures~\ref{fig:multiplot_vary_param_MUST} to \ref{fig:openIRM_per_parameter_boxen_direct_vs_indirect_100_outliers_removed} display the empirical distributions using letter-value plots (boxenplots). Unlike standard boxplots, the varying widths represent successive quantiles extending into the tails, allowing for a precise visual comparison of skewness and heavy tails, which are obscured if one only reports the variance.

While the preceding figures already illustrated the difference in estimator variances, we deliberately present the full empirical distributions here. Because the underlying asset-liability cash flows frequently produce skewed distributions and heavy tails (see, for example, \Cref{fig:multiplot_vary_param_MUST}e), evaluating estimator performance based solely on variance is scientifically insufficient.

Analyzing these full distributions allows for multiple observations. The choice of parameter ranges heavily influences the resulting available capital as well as the estimators' variance for both methods, see \Cref{fig:multiplot_vary_param_MUST}a. Because of this, we chose the ranges to be realistic while still allowing us valuable insights into possible dependencies or stark variations.
The dispersion of the distributions varies heavily depending on the underlying parameters, see \Cref{fig:multiplot_vary_param_MUST}e. Although the direct method has a smaller variance in the MUST case for most settings, \Cref{fig:multiplot_vary_param_MUST}f indicates that this does not hold everywhere, confirming our previous analysis in \Cref{rem:edge_cases_of_estimators_variance}.
We can see that the proven equality of the expected value always holds. Also, the distributions clearly differ in general shape and skewness, further validating the need to study them beyond their pure variance.

For the more realistic IS case, the general behavior is similar, but one crucial difference can be seen. For this case, the indirect method has a smaller variance for most settings, although sometimes, see \Cref{fig:multiplot_vary_param_IS}d and~h, the direct method can still be narrower.

\begin{figure}[tb]
	\centering
	\includegraphics[width=1\textwidth]{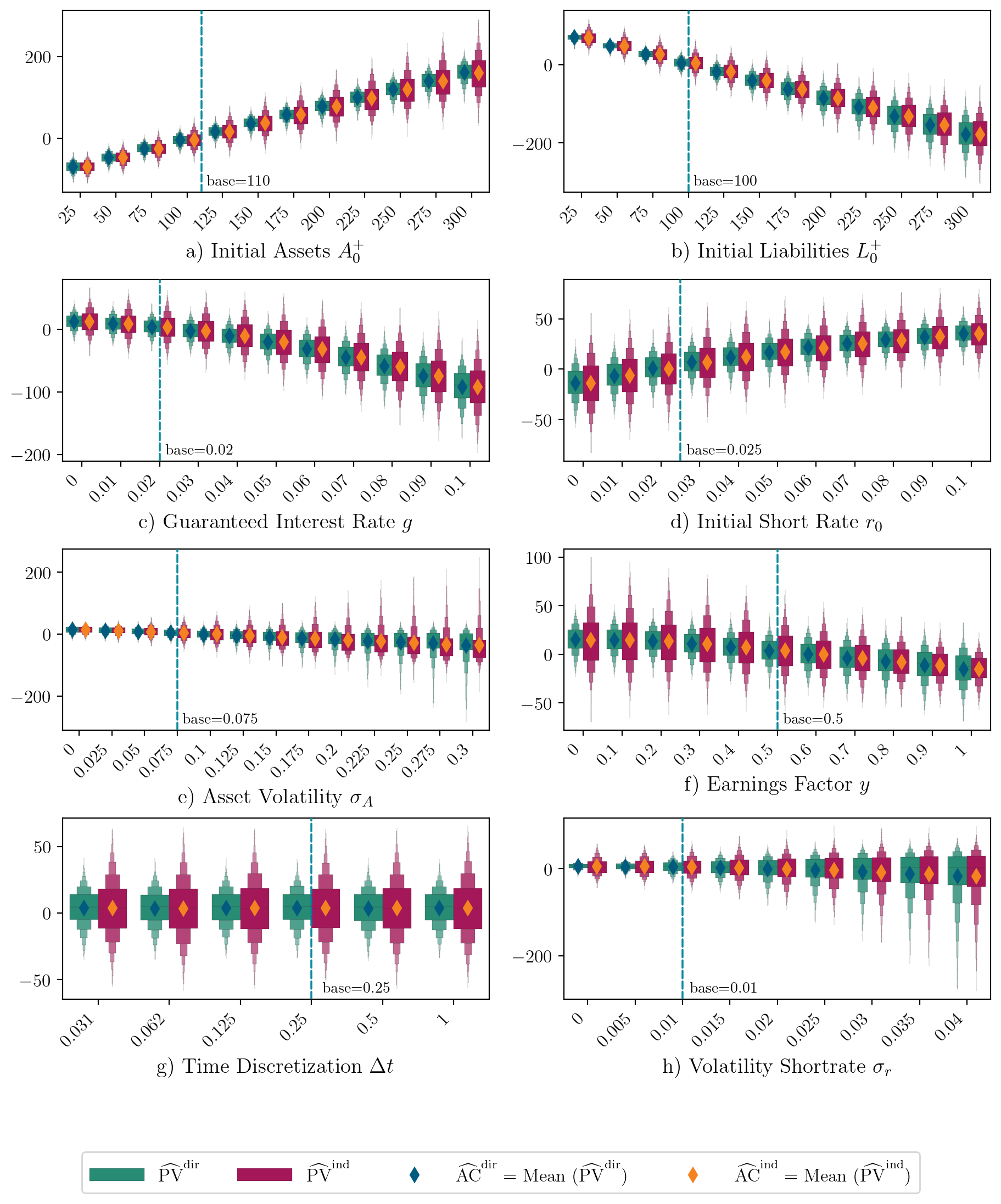}
	\caption{Distribution of $\ACdirest$ and $\ACindest$ for varying parameters in Bauer's model, MUST case. The $100$ outermost values on each side were removed for better visual clarity, see \Cref{fig:appendix_multiplot_vary_param_MUST} for the version with no removed values.}
	\label{fig:multiplot_vary_param_MUST}
\end{figure}

\begin{figure}[tb]
	\centering
	\includegraphics[width=1\textwidth]{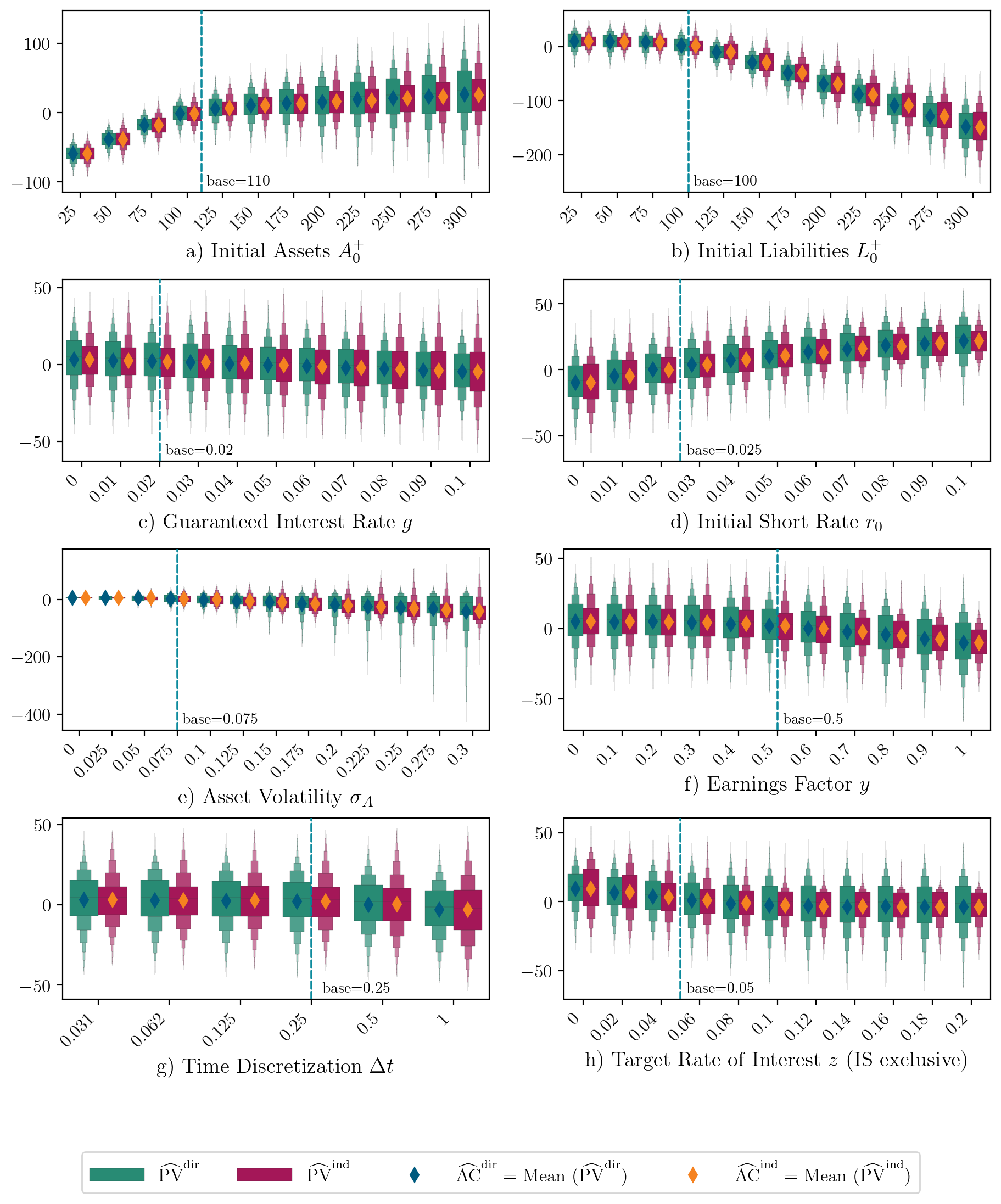}
	\caption{Distribution of $\ACdirest$ and $\ACindest$ for varying parameters in Bauer's model, IS case. The $100$ outermost values on each side were removed for better visual clarity, see \Cref{fig:appendix_multiplot_vary_param_IS} for the version with no removed values.}
	\label{fig:multiplot_vary_param_IS}
\end{figure}

We present the same comparison of direct and indirect distribution for the base setting and varying a single parameter for openIRM by \cite{wolf_OpenIRMPubliclyAccessible_2025} in \Cref{fig:openIRM_per_parameter_boxen_direct_vs_indirect_100_outliers_removed}.
Starting at the risk horizon of $1$ year, we run the inner simulation of openIRM used to estimate the market-value of liabilities and consequently the available capital. We vary all used risk factors and the initial values (at year $1$) of the underlying stochastic processes of the capital market using a Gaussian 2-Factor (G2++) model. That is, the short rate is given by $r_t = x_t + y_t + \psi_t$, $\psi_t$ being deterministic, and $S_t$ is the stock process. For more details, see the original paper.

We observe, that generally the indirect method results in a much narrower distribution. This only changes in settings where we have a low interest rate environment, cf.\ \ref{fig:openIRM_per_parameter_boxen_direct_vs_indirect_100_outliers_removed}a or~\ref{fig:openIRM_per_parameter_boxen_direct_vs_indirect_100_outliers_removed}d, where the yield curve PC1 roughly translates to a change in overall level of the interest rate curve. We can also observe that the available capital of direct and indirect method, given as the mean of the estimators, is approximately the same for each individual setting, confirming \Cref{thm:equality_dir_ind} in more complicated models.

\begin{figure}[tb]
	\centering
	\includegraphics[width=1\textwidth]{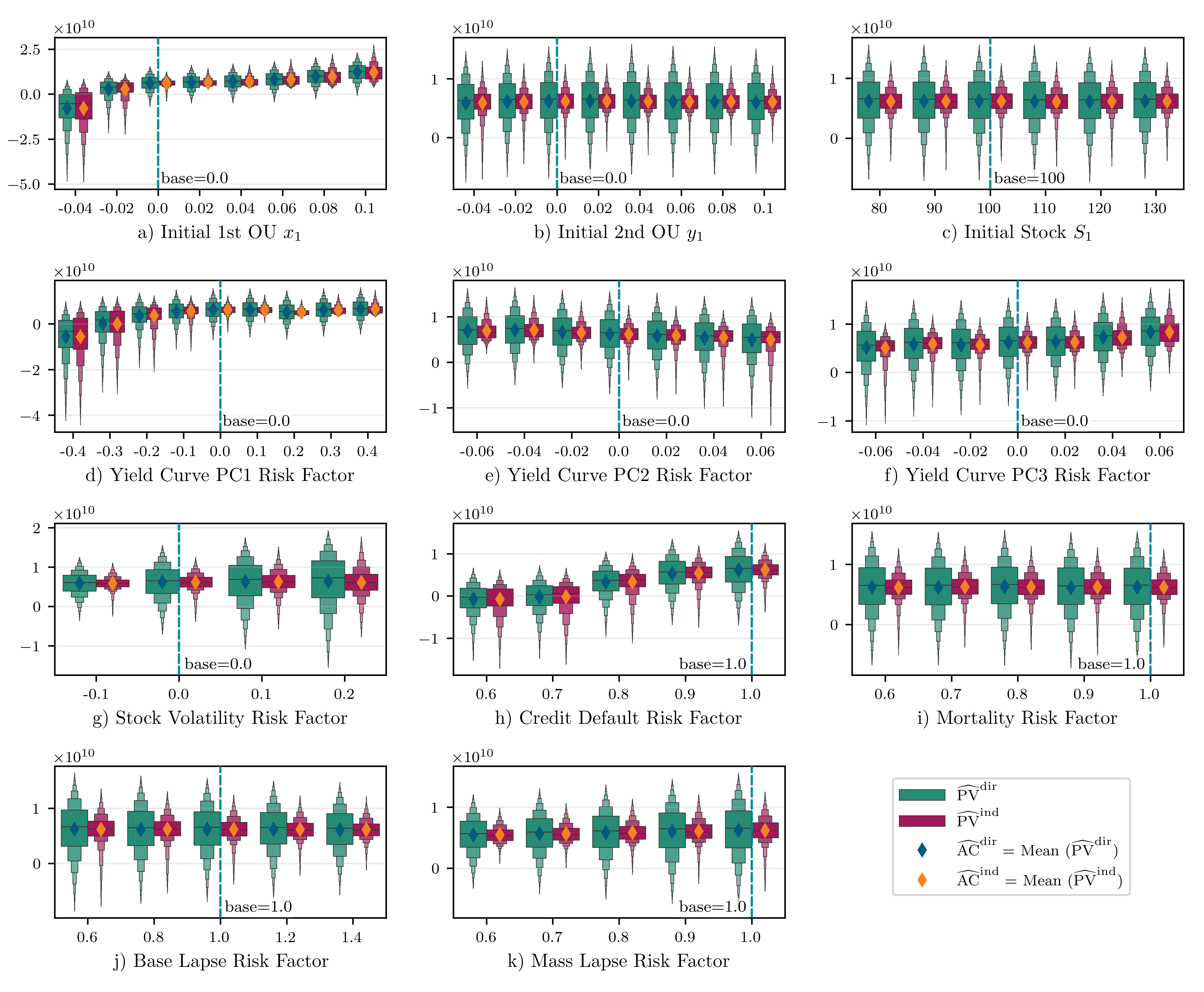}
	\caption{Distribution of $\ACdirest$ and $\ACindest$ for varying parameters in openIRM. The $100$ outermost values on each side were removed for better visual clarity, see \Cref{fig:appendix_openIRM_per_parameter_boxen_direct_vs_indirect_0_outliers_removed} for the version with no removed values.}
	\label{fig:openIRM_per_parameter_boxen_direct_vs_indirect_100_outliers_removed}
\end{figure}

\FloatBarrier
\subsection{Control variates}

\subsubsection*{Studied control variates}

For the following analysis, we look at two control variates in particular, the crude control variate and the mixed control variate.

\begin{defi}\label{def:crude_control_variate_estimator}
	We define the \emph{crude present value control variate estimator} as the combination of the direct and indirect estimator for the available capital, i.e.,
	\begin{align}
		\PVcv_\riskhorizon(b) = \PVdir_\riskhorizon - b \left( \PVdir_\riskhorizon - \PVind_\riskhorizon \right).
	\end{align}
\end{defi}

\begin{defi}\label{def:mixed_control_variate_estimator}
	We define the \emph{mixed present value control variate estimators} for our study as the multi-control estimator combining direct and indirect method as well as mixed present values with all atoms $\{t\}, t \in \timeInMixed_{\mathrm{all}} = \{1, 2, \dots, \timehorizon \}$, i.e.,
	\begin{align}
		\PVmix_\riskhorizon(b)
		\coloneqq
		\PVdir_\riskhorizon
		- b_0 \left( \PVdir_\riskhorizon - \PVind_\riskhorizon \right)
		- \sum_{t \in \timeInMixed_{\mathrm{all}}} b_t \left( \PVdir_\riskhorizon - \PVmix_\riskhorizon(\timehorizon, \{t\}) \right).
	\end{align}
\end{defi}

\begin{rem}
	Combining previous results,
	\begin{align}
		\ACdir_\riskhorizon
		= \E^{\mQ}*{\PVcv_\riskhorizon(b) | \cF_\riskhorizon}
		= \E^{\mQ}*{\PVmix_\riskhorizon(b) | \cF_\riskhorizon}.
	\end{align}
\end{rem}

In the following experimental results, we have again $\riskhorizon = 0$ for Bauer's model and $\riskhorizon = 1$ for openIRM, due to openIRM including the outer simulation over one year of the Solvency II scheme.

\subsubsection*{Computational costs}

While control variates reduce estimator variance, the required computational overhead may offset these gains, which we will now discuss. We establish the direct estimation method as our computational baseline.

The direct estimator requires all policyholder cash flows and the final free reserve, meaning all relevant balance-sheet positions are already computed during the stochastic simulation. As a result, the calculation of quantities for both the crude and mixed control variate estimators introduces no additional computational cost, though it does require a slight increase in memory to store them.

For a crude control variate with a fixed coefficient $b$, the computational cost is negligibly higher than that of the direct estimator. However, as described in \Cref{sec:control_variates}, estimating the optimal coefficient $b^*$ via $\hat{b}$ introduces additional computational load that depends on the chosen option.
We use the naive approach where we employ the entire data set for estimating $\hat{b}$ as well as the control variate estimator. This introduces a negligible bias scaling in $O(1/N)$ due to them not being independent (cf.\ \cite{glasserman_MonteCarloMethods_2010}).
In our experimental results, we were not able to observe a noticeable bias for small $N$, which we account to the standard error scaling in $O(1/\sqrt{N})$.
Alternatively, one can avoid this bias by using a fixed subset of the observed data to estimate $\hat{b}$ and then use this fixed coefficient to compute the control variate estimator for the remaining observations. This would also keep the computational overhead needed for $\hat{b}$ fixed, but of course would reduce the number $N$ of usable data points for the available capital estimation.

For the mixed control variate, estimating the optimal coefficients $b^*$ and combining the estimators can introduce a noticeable overhead. This overhead scales with the number of time steps, which influences the number of possible mixed estimators. However, we believe that in practice this additional cost is insignificant relative to the primary cost of the stochastic simulation of the balance sheet for a typical life insurer.

\subsubsection*{Experimental Results}

For visual clarity, we first compare the direct, indirect, and crude control variate estimators. We omit $\PVmix(b)$ from the convergence and distribution plots as it behaves similarly to the crude estimator and offers limited additional insight.

\paragraph{Convergence plot.}
In \Cref{fig:estimator_comparison_1x2_with_CI_MUST}, direct, indirect, and the crude control variate estimator are compared for a varying amount of observations $N$. We can see the estimators in \Cref{fig:estimator_comparison_1x2_with_CI_MUST}a, and their approximate $95\%$ confidence intervals (CI) can be seen in \Cref{fig:estimator_comparison_1x2_with_CI_MUST}b.
The same is shown in \Cref{fig:estimator_comparison_1x2_with_CI_IS} for the more realistic IS case of Bauer's model. For both models, the parameterization is fixed to our base case again.
Note that, the crude control variate estimator estimates the coefficient $b$ only with the available subset of $N$ observations.

We observe that for the MUST case, the improvement given by the crude CV estimator is small compared to the direct estimator. For the more realistic IS case, the improvement by the crude CV is clearly visible. The estimator seemingly converges much faster to the true value, and its confidence interval is narrower as well.

\begin{figure}[tb]
	\centering
	\includegraphics[width=1\textwidth]{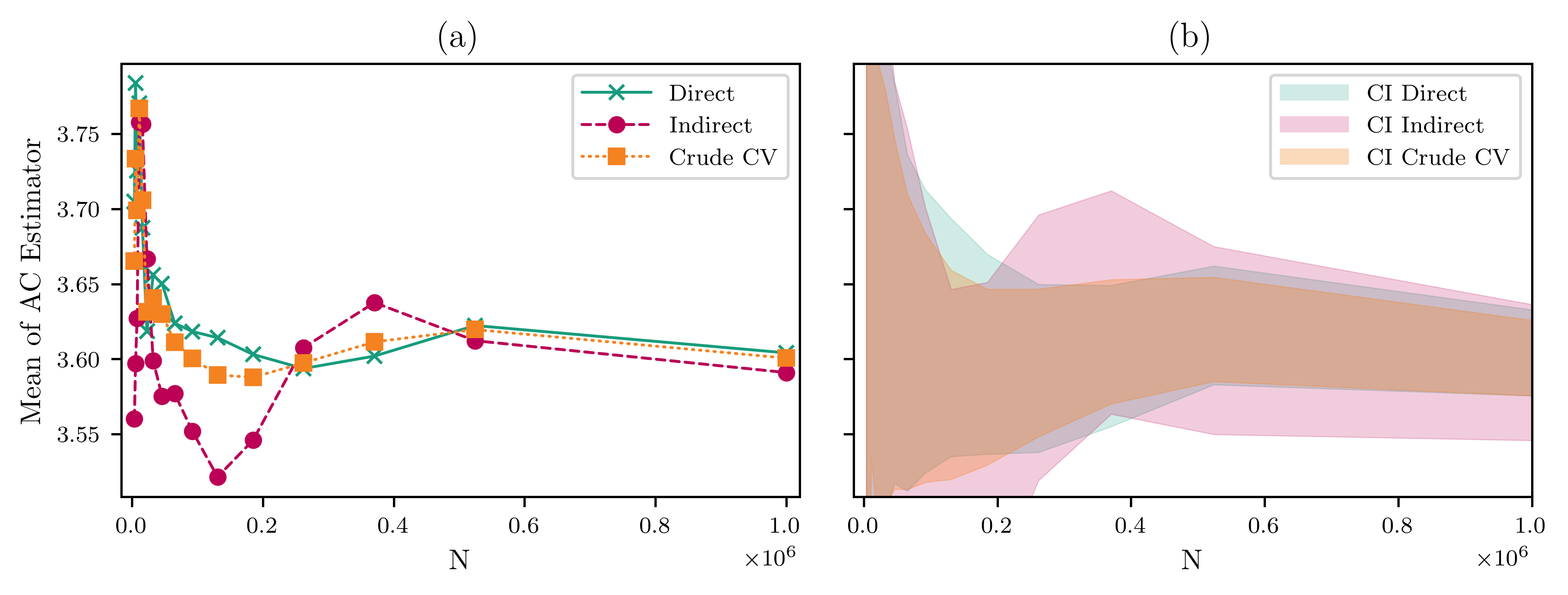}
	\caption{Comparison of direct, indirect, and crude control variate estimator for Bauer's model MUST case for varying number of observations $N$. (a) The estimated value given by the mean. (b) Approximate $95\%$ confidence intervals for estimators.}
	\label{fig:estimator_comparison_1x2_with_CI_MUST}
\end{figure}

\begin{figure}[tb]
	\centering
	\includegraphics[width=1\textwidth]{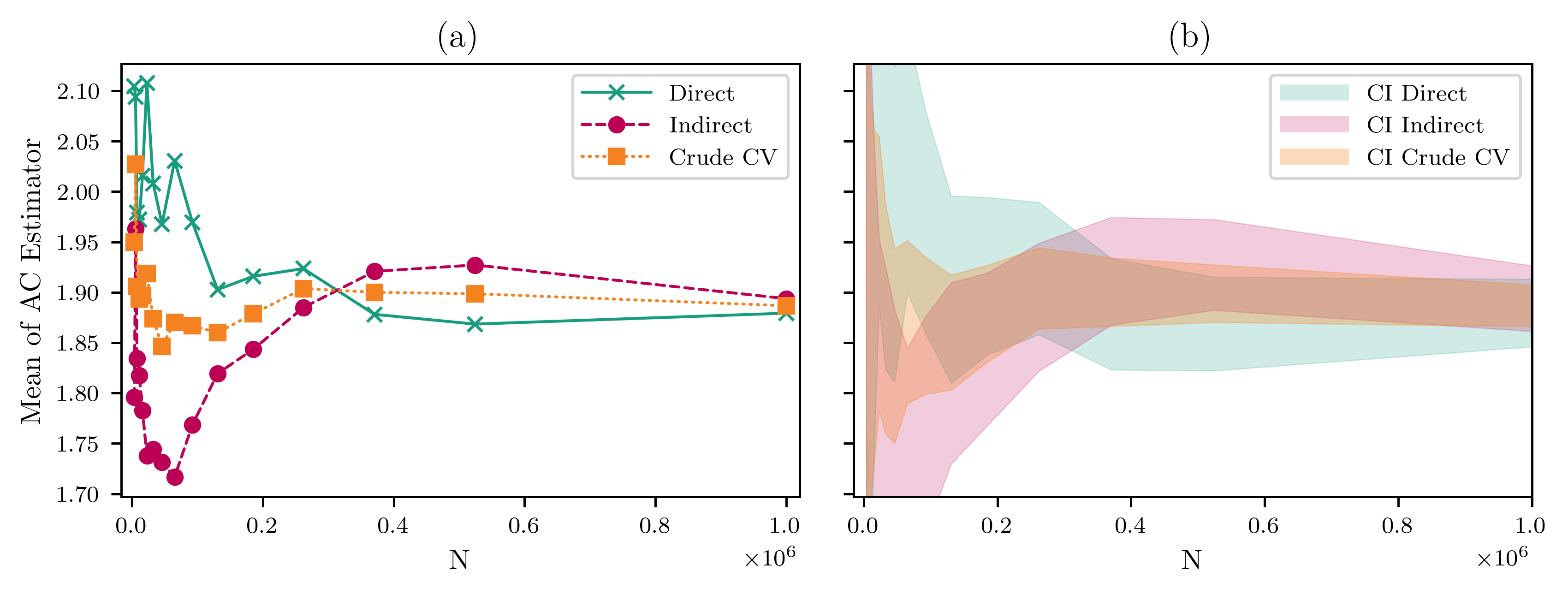}
	\caption{Comparison of direct, indirect, and crude control variate estimator for Bauer's model IS case for varying number of observations $N$. The crude control variate estimator estimates the coefficient $b$ only with the available subset of $N$ observations. (a) The estimated value given by the mean. (b) Approximate $95\%$ confidence intervals for estimators.}
	\label{fig:estimator_comparison_1x2_with_CI_IS}
\end{figure}

\paragraph{Final estimates distributions.}
In \Cref{fig:kde_per_run_estimator_distributions}, we compare the same estimator's distribution in Bauer's model for both (a) MUST and (b) IS case. For each single estimated value, we take the mean over $1\,000$ realizations of the respective random variable. We repeat this $1\,000$ times. The resulting estimator distribution is approximated using the Kernel Density Estimation (KDE) provided by \cite{waskom_SeabornkdeplotSeabornStatistical_2024}, which smooths the observations with a Gaussian kernel to produce a continuous density estimate. We use the base setting for our parameters.

We observe that for the MUST case, the direct estimator's distribution is narrower than the indirect estimator's distribution. The crude control variate estimator only has a slightly more narrow distribution than the direct estimator. This is consistent with our previous findings for the MUST case. For the more realistic IS case, direct and indirect estimator have a similar distribution, but their combination presented in the crude control variate estimator has a clearly narrower shape. Hence, using the crude estimator in the IS case would results in a faster convergence.

\begin{figure}[tb]
	\begin{subfigure}{0.5\linewidth}
		\centering
		\includegraphics[width=1\textwidth]{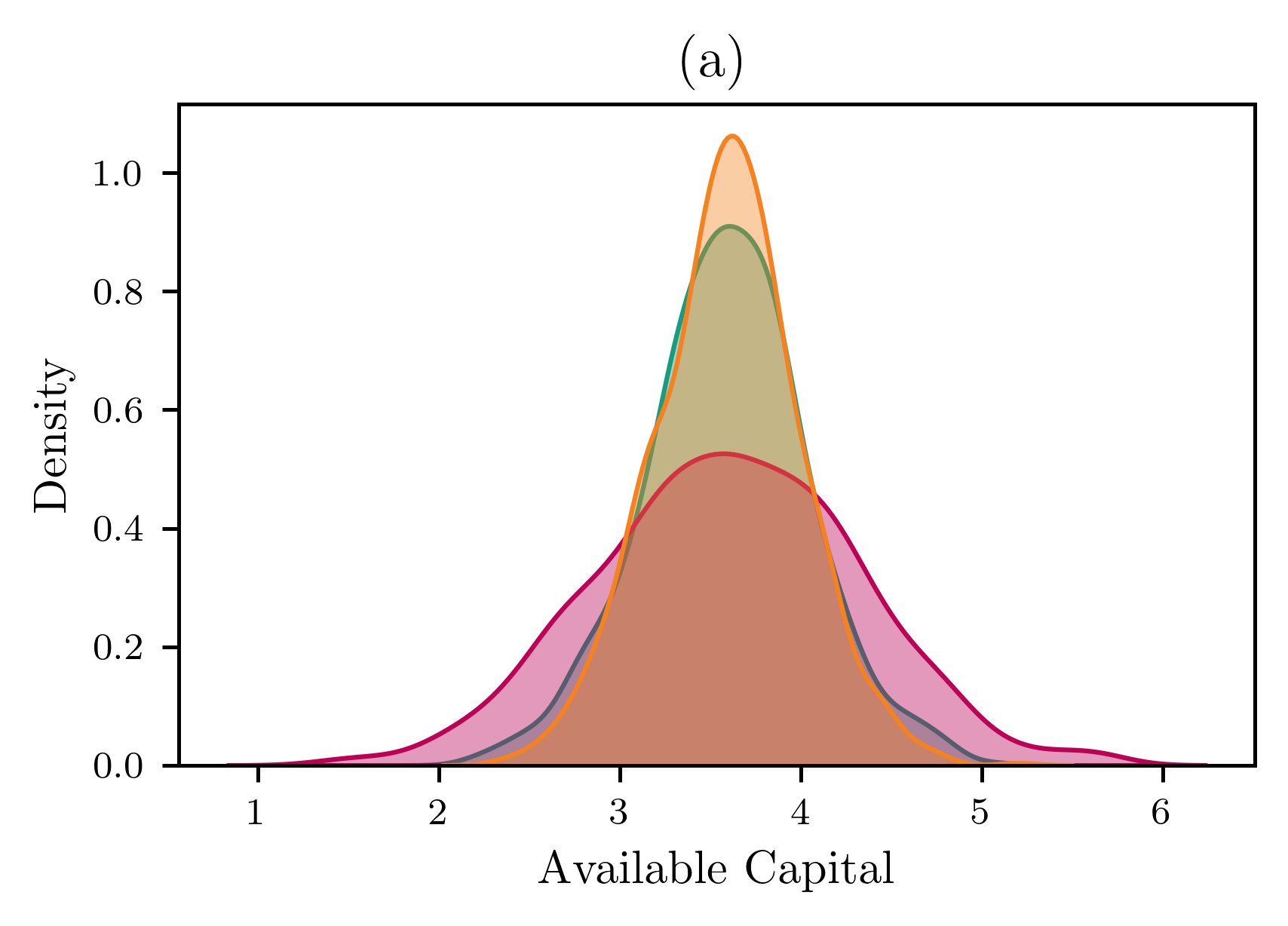}
	\end{subfigure}
	\begin{subfigure}{0.48\linewidth}
		\centering
		\includegraphics[width=1\textwidth]{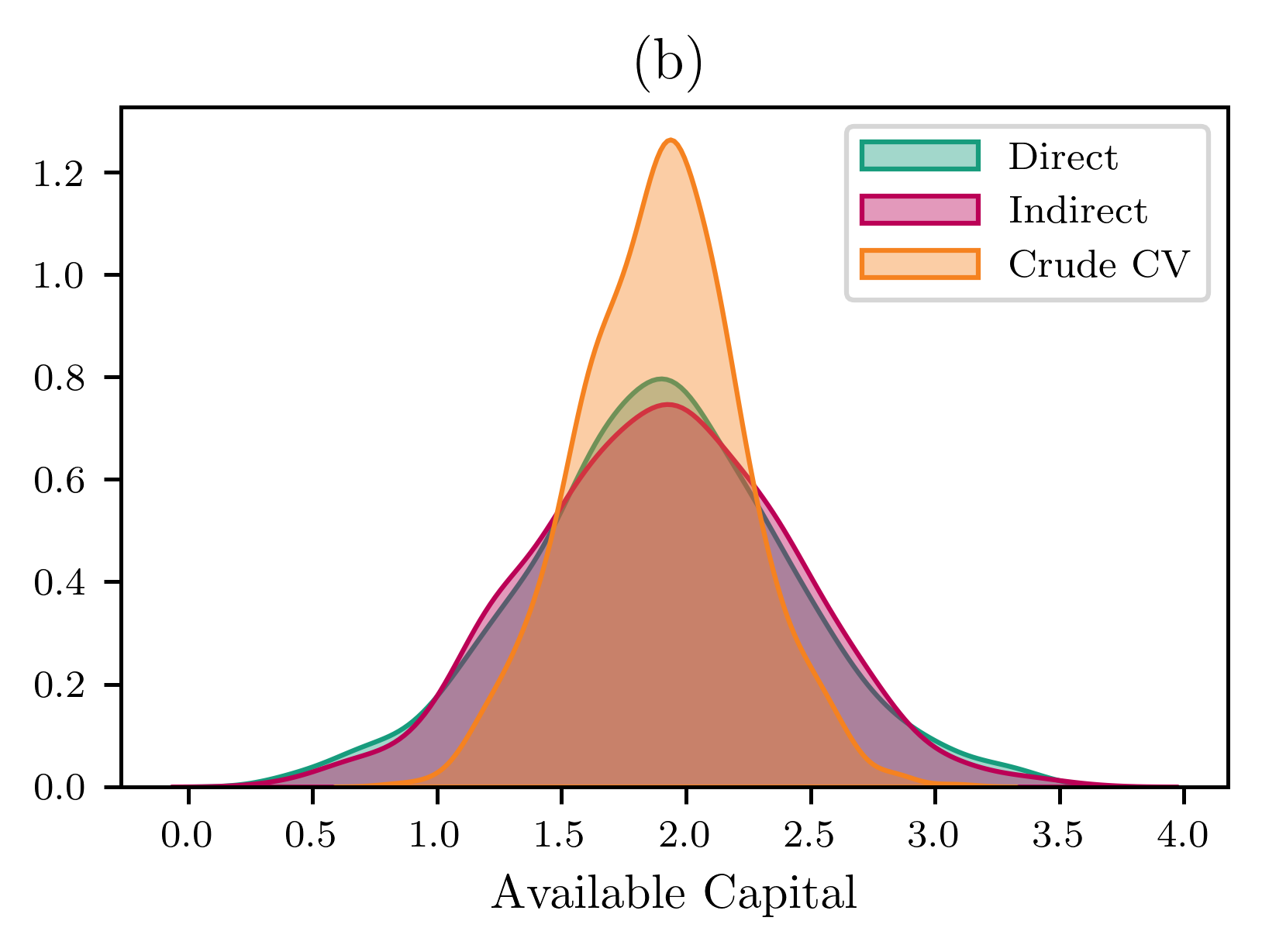}
	\end{subfigure}
	\caption{Comparison of direct, indirect, and crude control variate estimator's distribution (approximated with a Kernel Density Estimation) in Bauer's model (a) MUST and (b) IS case for $1\,000$ different estimations using $1\,000$ simulations each.}
	\label{fig:kde_per_run_estimator_distributions}
\end{figure}

\paragraph{Variance reduction factor of control variate estimators.}
For Bauer's model, we study the variance reduction factor (VRF) for the base setting with one varying parameter in \Cref{fig:multiplot_VRF_Multi_CV_Mixed_2_MUST} for the MUST case and in \Cref{fig:multiplot_VRF_Multi_CV_Mixed_2_IS} for the IS case.
Recall that, the VRF is given as $1 - R^2$ which simplifies to $1 - \rho^2$ in the 1-dimensional case, where $\rho$ is the correlation of target variable and control (cf.\ \Cref{sec:control_variates}). We estimate $\rho$ via its sample correlation over $10\,000$ simulations.
Note that due to the central limit theorem and the confidence interval of the sample mean estimator, the VRF translates one to one to the reduction in necessary Monte-Carlo samples to reach a specific confidence interval length.

\Cref{fig:multiplot_VRF_Multi_CV_Mixed_2_MUST} explains our previous findings about the crude estimator only slightly improving the simple direct estimator in our base setting. We see that for the parameters used, the correlation of $\scriptstyle \ACdirest$ and $\scriptstyle \ACdirest - \ACindest$ is close to $0$, resulting in a variance reduction factor close to $1$ and, hence, only a slight reduction in variance for the resulting control variate estimator. But for some parameters deviating from the base setting, we can observe that the correlation is much higher, cf.\ \Cref{fig:multiplot_VRF_Multi_CV_Mixed_2_MUST}e or~\ref{fig:multiplot_VRF_Multi_CV_Mixed_2_MUST}f.

In all subfigures we see the mixed estimator at least equal to the reduction of the crude estimator. This is expected as the indirect estimator is also used as a control in the mixed estimator, resulting in the mixed estimator having equal to or more information than the crude one for all settings.

In the more realistic IS case shown in \Cref{fig:multiplot_VRF_Multi_CV_Mixed_2_IS}, we can see a tangible improvement of $\PVmix(b)$ for almost all parameter settings. It ranges from $0\%$ to $20\%$ reduction of the variance compared to the crude estimator $\PVcv(b)$.
It is difficult to say if this improvement is worth the effort of using the mixed estimator: because all quantities are already computed during the simulation, the computational overhead is limited and would not offset the improvements. But the initial implementation of the more complicated mixed estimator might take some effort.
We make one additional observation here: the correlation seems to increase the more the randomness of the market results in randomness in policyholder cash flows. Meaning that if policyholders are awarded a larger share of the market returns (large $y$ or $z$) or have less guarantees (small $g$), the direct and indirect estimator have a higher correlation resulting in a better performing crude control variate estimator.
This observation is consistent with the low correlations seen in the MUST case. Here, most of the policyholder's cash flows are determined by the guaranteed interest rate $g$. Small earnings factor $\earningsFactor=0.5$ and policyholder share $\participationRate=0.9$ in the base setting result in a low participation of market returns. \Cref{fig:multiplot_VRF_Multi_CV_Mixed_2_MUST}f indicates that a higher earnings factor dramatically increases the correlation of direct and indirect estimator.

In both cases, the correlation vanishes as the short-rate volatility $\shortrateVola$ increases (cf. subfigure (g)). We attribute this to the increased volatility of the discount factor $\disc{\riskhorizon}{t}$, which seemingly masks the underlying correlation structure of the balance-sheet processes.

In \Cref{fig:appendix_multiplot_VRF_Multi_CV_Mixed_MUST} and \Cref{fig:appendix_multiplot_VRF_Multi_CV_Mixed_IS}, we also plot single control estimators based on different time subsets $\timeInMixed$: $H_1$ and $H_2$ for the first and second halves of the time steps, and $Q_i$ for the $i$-th quarter.

\begin{figure}[tb]
	\renewcommand{\thesubfigure}{\roman{subfigure}}
	\begin{subfigure}{\textwidth}
		\caption{MUST case}
		\includegraphics[width=1\textwidth]{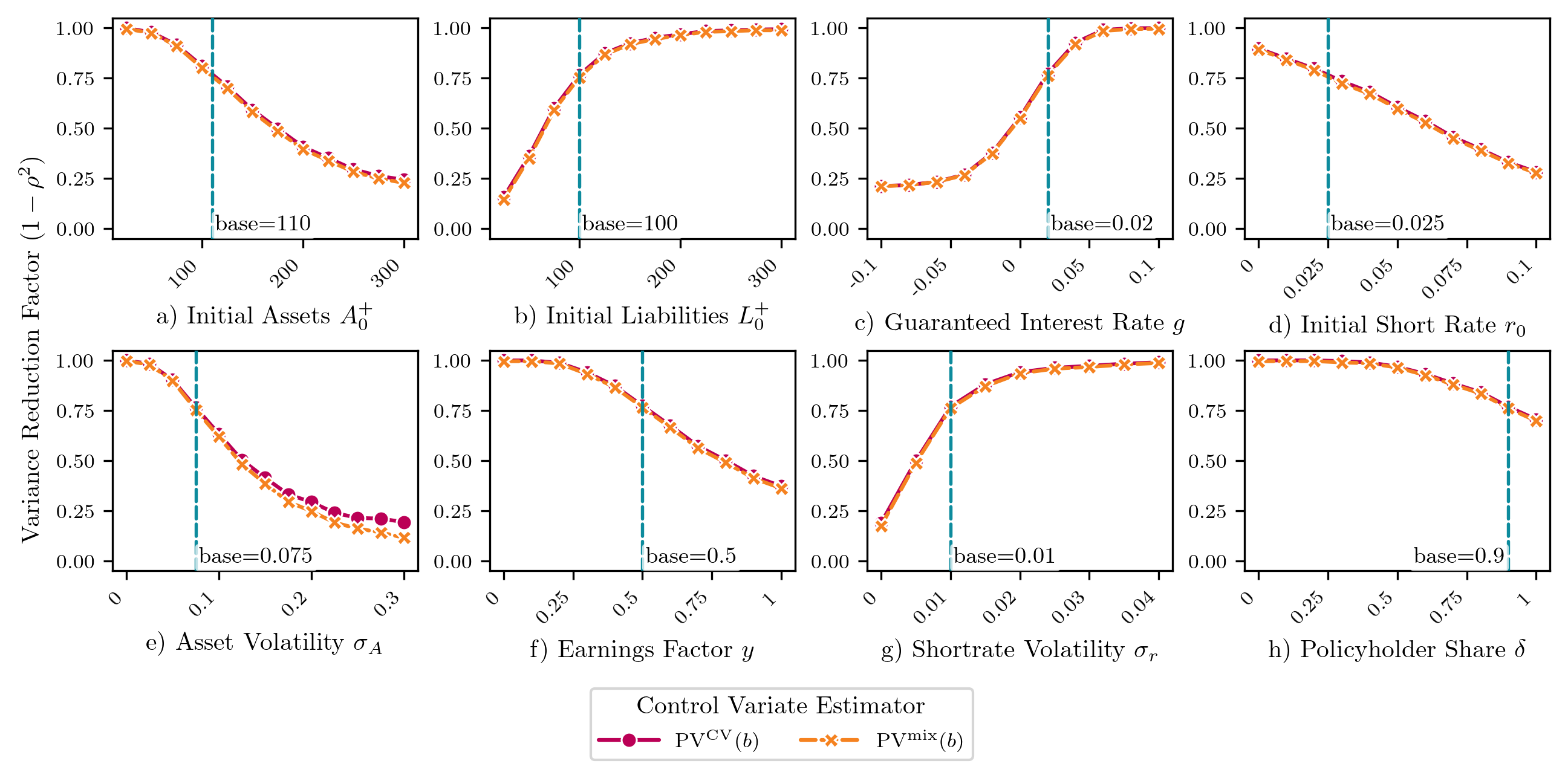}
		\label{fig:multiplot_VRF_Multi_CV_Mixed_2_MUST}
	\end{subfigure}
	\begin{subfigure}{\textwidth}
		\caption{IS case}
		\includegraphics[width=1\textwidth]{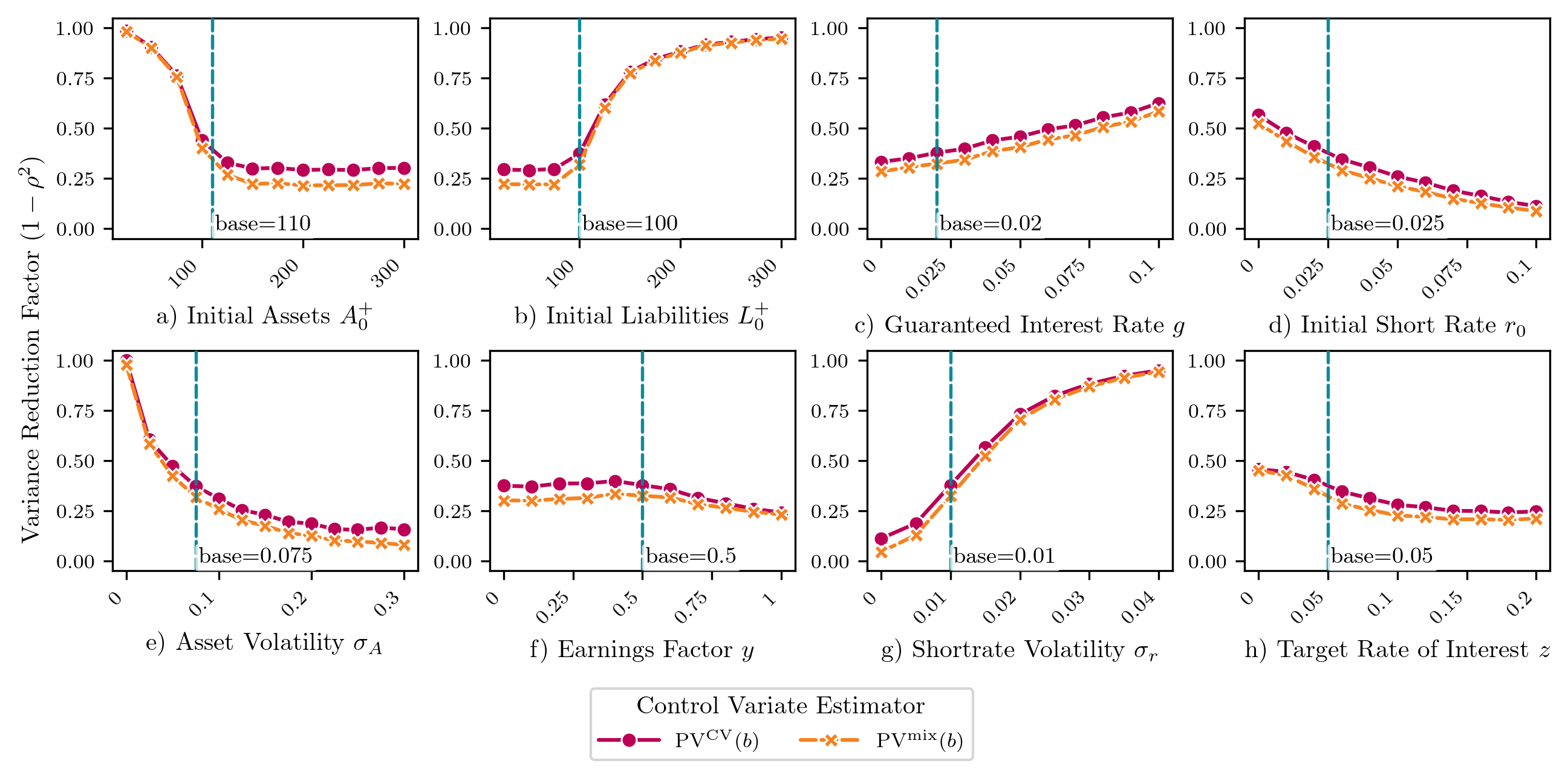}
		\label{fig:multiplot_VRF_Multi_CV_Mixed_2_IS}
	\end{subfigure}
	\caption{Impact of parameter variation on the variance reduction factor for both the crude control variate estimator (with ${\scriptstyle \ACdirest - \ACindest}$ as the control) and the mixed control variate estimator in Bauer's model, (a) MUST, (b) IS case. Each subplot illustrates the effect of varying a single parameter, with the vertical line marking the constant base setting for all other parameters.}
\end{figure}

\FloatBarrier

\begin{figure}[tb]
	\centering
	\includegraphics[width=0.6\textwidth]{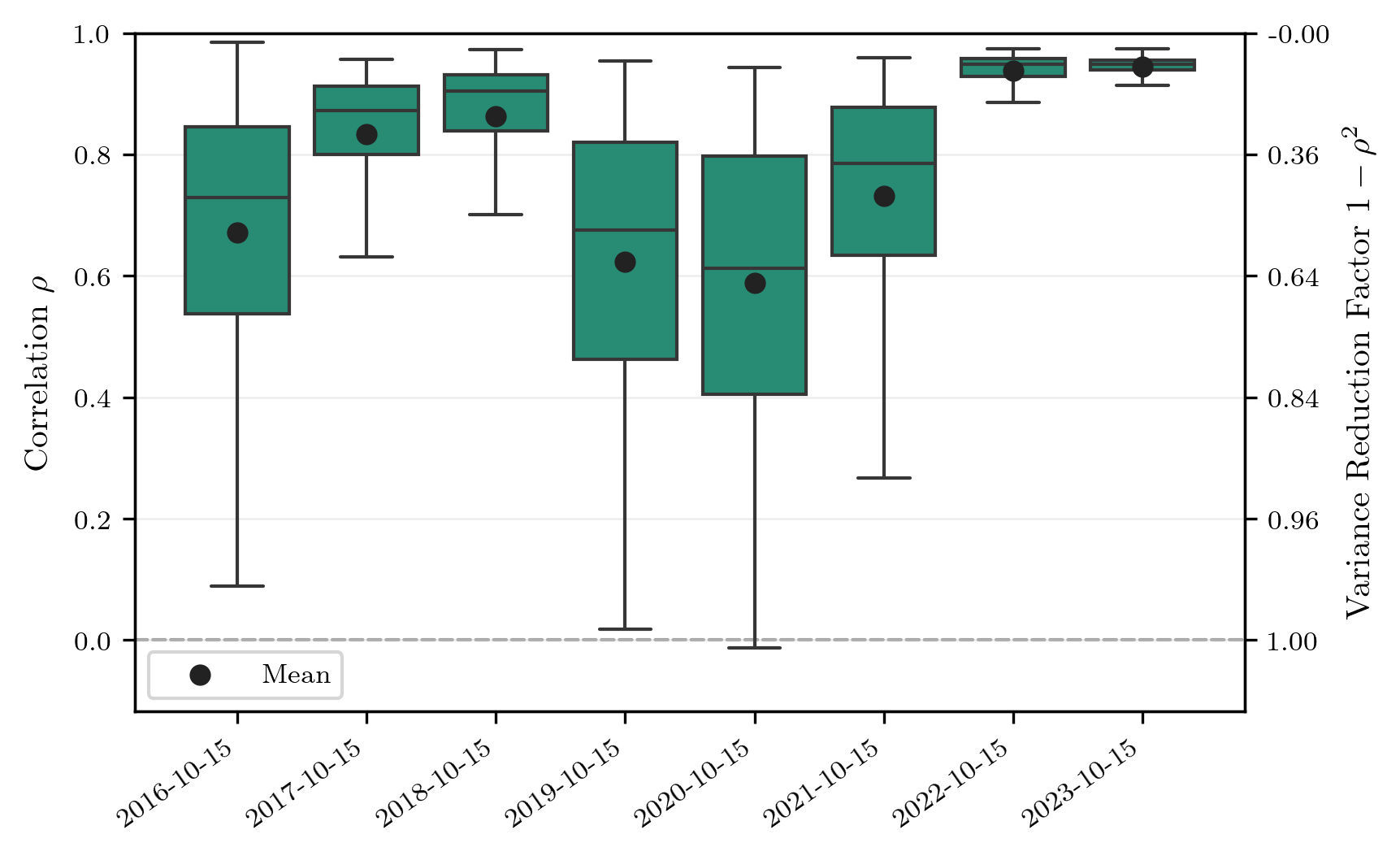}
	\caption{Correlation $\rho$ and variance reduction factor $1 - \rho^2$ of $\ACdirest$ and $\ACdirest - \ACindest$ for $1\,000$ estimations (one for each outer simulation) based on $1\,000$ realizations of the respective present value $\PV$ (one for each inner simulation) for varying years in openIRM.}
	\label{fig:openIRM_dir_ind_cv_correlation_over_years}
\end{figure}

In \Cref{fig:openIRM_dir_ind_cv_correlation_over_years}, we analyze the variance reduction factor of the crude control variate estimator $\PVcv$ in openIRM for varying years. Note that in openIRM, the selected date changes the capital market calibration as well as the current balance sheet, i.e., the initial parameterization of the internal risk model and therefore of our simulation changes substantially.
Then, for each selected date we execute the usual nested Monte-Carlo protocol that is employed for estimating the Solvency Capital Requirement. This is done by simulating $1\,000$ outer paths under the physical measure $\mathbb{P}$ until the risk horizon of $1$ year. Then, a second, inner simulation is done under the risk-neutral measure $\mQ$ to estimate the $\AC$ employing $1\,000$ paths as well. This equal budget allocation is not optimal (cf.\ \cite{gordy_NestedSimulationPortfolio_2010a}), but will suffice for our analysis. Then, the correlation, and subsequently the variance reduction factor, is computed for each outer path over the $1\,000$ realizations of $\ACdir$ and $\ACind$ from the inner simulation.

We observe that the crude control variate estimator generally improves the estimation significantly, in some years (2022, 2023) up to a factor of $1/5$ and very low spread between the different outer simulations. But in other years (2016, 2019, 2020) the spread is very large, ranging from no improvement up to a reduction of $1/5$ in some rare cases.

\begin{figure}[tb]
	\centering
	\includegraphics[width=1\textwidth]{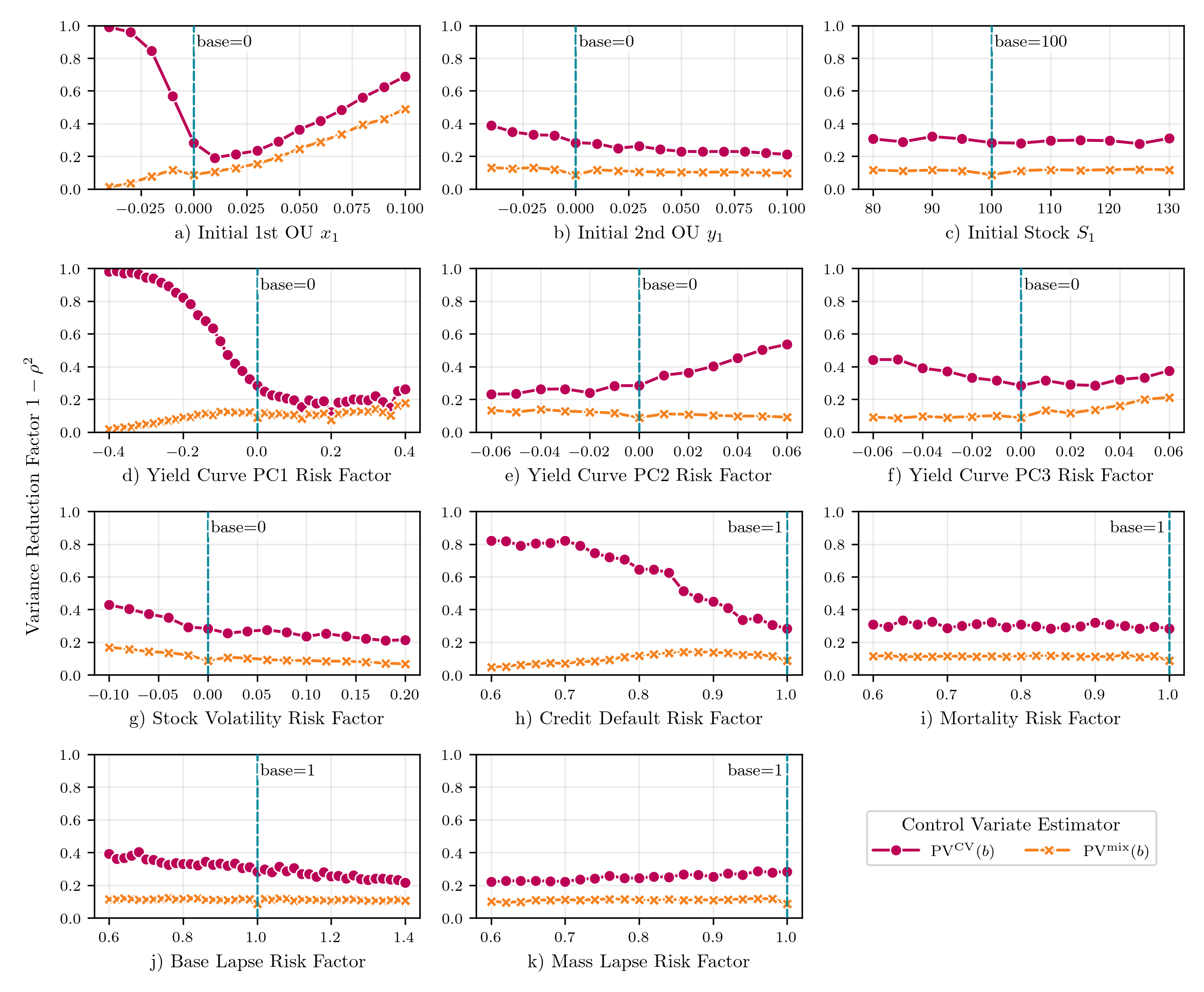}
	\caption{Impact of parameter variation on the variance reduction factor for the crude control variate estimator (with ${\scriptstyle \ACdirest - \ACindest}$ as the control) in openIRM. The parameters are used as inputs for the inner simulation of the Solvency II nested simulation after one year. Each subplot illustrates the effect of varying a single parameter, with the vertical line marking the constant base setting for all other parameters. The chosen date for the market calibration is January 3rd, 2020.}
	\label{fig:openIRM_2_per_parameter_variance_reduction}
\end{figure}

In \Cref{fig:openIRM_2_per_parameter_variance_reduction}, the variance reduction factors of $\PVcv(b)$ and $\PVmix(b)$ are depicted for openIRM with varying risk factors and all other variables kept in their respective base setting indicated by the blue vertical line.
Note that, openIRM is the most realistic internal risk model of the three benchmarks. We observe that $\PVcv(b)$ reduces the variance of our estimator by a factor of approximately $1/3$ for most parameterizations, ranging from $0.2$ in the best case up to no improvement in the worst case for the variable ranges we looked at.

Furthermore, we observe that the much more complicated $\PVmix(b)$ is able to improve the estimation substantially. This is in stark contrast to our previous observations made for the more simple Bauer's model variants. Surprisingly, $\PVmix(b)$ seems to perform especially good in parameter regions where $\PVcv(b)$ is unable to significantly reduce the variance, cf.\ Subfigure~\ref{fig:openIRM_2_per_parameter_variance_reduction}a for $x_1 < 0$ or \ref{fig:openIRM_2_per_parameter_variance_reduction}d for the PC1 risk factor being less than 0.

Recall that $\PVmix(b)$ used the arbitrary selection of $\PVmix$ with the single time steps $t \in \timeInMixed_{\mathrm{all}}$ (as well as $\PVind$) as the control. With a more intricate selection algorithm, it may be possible to improve $\PVmix(b)$ even further.

Summarizing, it seems like the superiority of either method is linked to the degree of coupling between assets and liabilities. In the MUST case, liabilities have a simple, almost bond-like dynamic. In the IS case, liabilities are much more path-dependent on asset returns. When this coupling is strong, the liabilities become volatile. The indirect method, which focuses on the net result (shareholder cash flows), might have less variance in this case because it implicitly nets out some of this shared asset-liability volatility.
In openIRM, the annual guaranteed interest rate is set to $0.25 \%$ for all years. Due to more optimistic interest rate forecasts, the simulated market quickly outperforms this guaranteed rate during the simulation. Hence, a larger part of the market returns is paid to the policyholders resulting in a stronger coupling of assets and liabilities. We believe that this might be one reason why the presented control variate estimators are successful in openIRM. Including the mixed estimators as controls seems to offset the effect of the missing coupling, although an explanation for this phenomena is future research.

\FloatBarrier

\section{Conclusion}\label{sec:conclusion}

Our initial contribution is a proof that the direct and indirect method converge to the same Available Capital, which provides a practical means of validating model implementations. We then generalize these two approaches into a mixed-method framework capable of producing $2^T$ unique estimators, where $T$ is the number of time steps in the simulation. Tests on three benchmark life insurer models show that the relative convergence speed is strongly model-dependent, and neither method can be declared universally superior.

We proposed and evaluated a set of novel control variate estimators formed from combinations of the previously derived estimators. The crude control variate, employing only the direct and indirect estimators, proved broadly effective in reducing variance across all three benchmarks. Any additional performance of more sophisticated control variates incorporating mixed estimators was found to be model-dependent. Specifically, for the openIRM benchmark, such an estimator yielded a substantial additional decrease in variance, whereas for the two Bauer models, it conferred only a slight advantage over the crude estimator.

A key advantage of our proposed estimators is their modularity. They can be used as a drop-in replacement for the standard direct estimator within existing Monte-Carlo frameworks. This compatibility means their variance-reducing properties can be compounded with those of established techniques like batching (cf.\ \cite{glasserman_MonteCarloMethods_2010}), antithetic variables (cf.\ \cite{korn_MonteCarloMethods_2010}), sequential simulations (cf.\ \cite{broadie_EfficientRiskEstimation_2011}), proxy models (cf.\ \cite{krah_LeastSquaresMonteCarlo_2020}), or sample recycling (cf.\ \cite{feng_SampleRecyclingMethod_2022}).

Several research questions remain open.
First, identifying an optimal mixed (control variate) estimator remains a challenge. The combinatorial complexity arising from the $2^T$ estimators renders a brute-force search computationally infeasible.
Second, we observed linear dependencies within certain subsets of mixed estimators but could not identify a systematic cause. Finally, the performance of these techniques on real-world portfolios, beyond our benchmarks, has yet to be assessed.

In conclusion, this work significantly expands the toolkit for practitioners estimating own funds, offering methods that can directly improve the efficiency of their existing procedures.

\subsection*{Acknowledgments}
I sincerely thank Tim Prokosch and Ralf Korn for valuable feedback that significantly improved this paper. I appreciate the thoughtful comments and suggestions by Andreas Zapp and Daniel Bauer. I extend my gratitude to Philipp Mahler and Ria Grindel for fruitful discussions.

\subsection*{Competing Interests}
Competing interests: The author declares none.

\clearpage
\printbibliography

\appendix

\counterwithin{figure}{section}

\clearpage

\section{Appendix}

\begin{table}[htbp]
	\centering
	\caption{Base setting for simulation configuration parameters used in Bauer's model.}
	\label{tab:sim_config_bauers_model}
	\begin{tabular}{ll}
		\textbf{Parameter} & \textbf{Value} \\
		\hline
		$\assetsAfter{0}$ & 110.0 \\
		$\liabsAfter{0}$ & 100.0 \\
		$\timehorizon$ & 10 \\
		$\dt$ & 0.25 \\
		$\shortrateMeanLevel$ & 0.03 \\
		$\shortrateReversionSpeed$ & 0.05 \\
		$\shortrateVola$ & 0.01 \\
		$\marketPriceOfRisk$ & 0.0 \\
		$r_0$ & 0.025 \\
		$\shortrateAssetCorr$ & 0.0 \\
		$\assetVola$ & 0.075 \\
		$\guarInt$ & 0.02 \\
		$\participationRate$ & 0.9 \\
		$\earningsFactor$ & 0.5 \\
		annual cash flows and liabilities & True \\
		measure & $\mQ$ \\
		seed & 75 \\
		$\targetRate$ & 0.05 \\
		$a$ & 0.05 \\
		$b$ & 0.30 \\
		$\alpha$ & 0.05
	\end{tabular}
\end{table}

\begin{figure}[tb]
	\centering
	\includegraphics[width=1\textwidth]{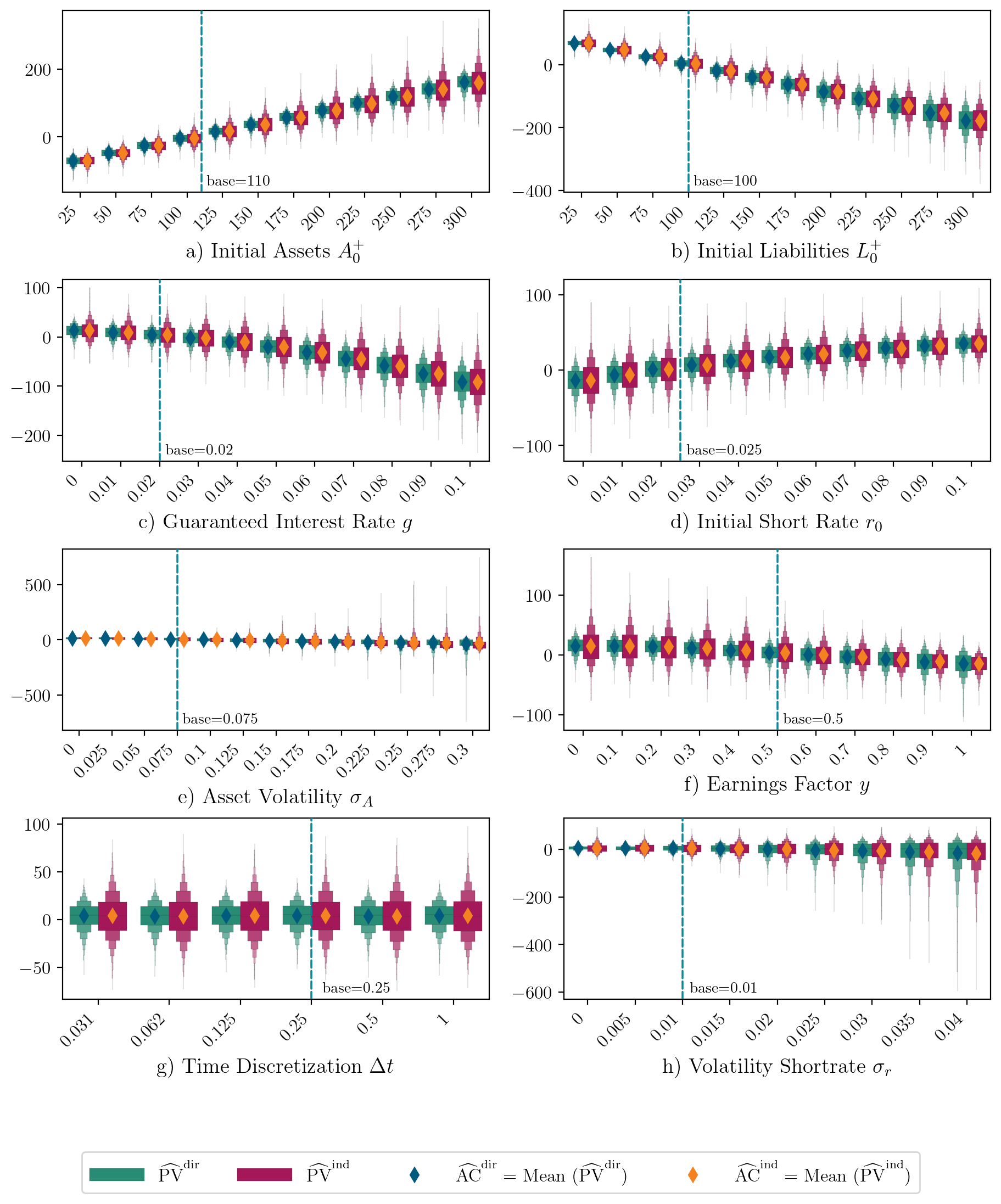}
	\caption{Distribution of $\ACdirest$ and $\ACindest$ for varying parameters in Bauer's model, MUST case.}
	\label{fig:appendix_multiplot_vary_param_MUST}
\end{figure}

\begin{figure}[tb]
	\centering
	\includegraphics[width=1\textwidth]{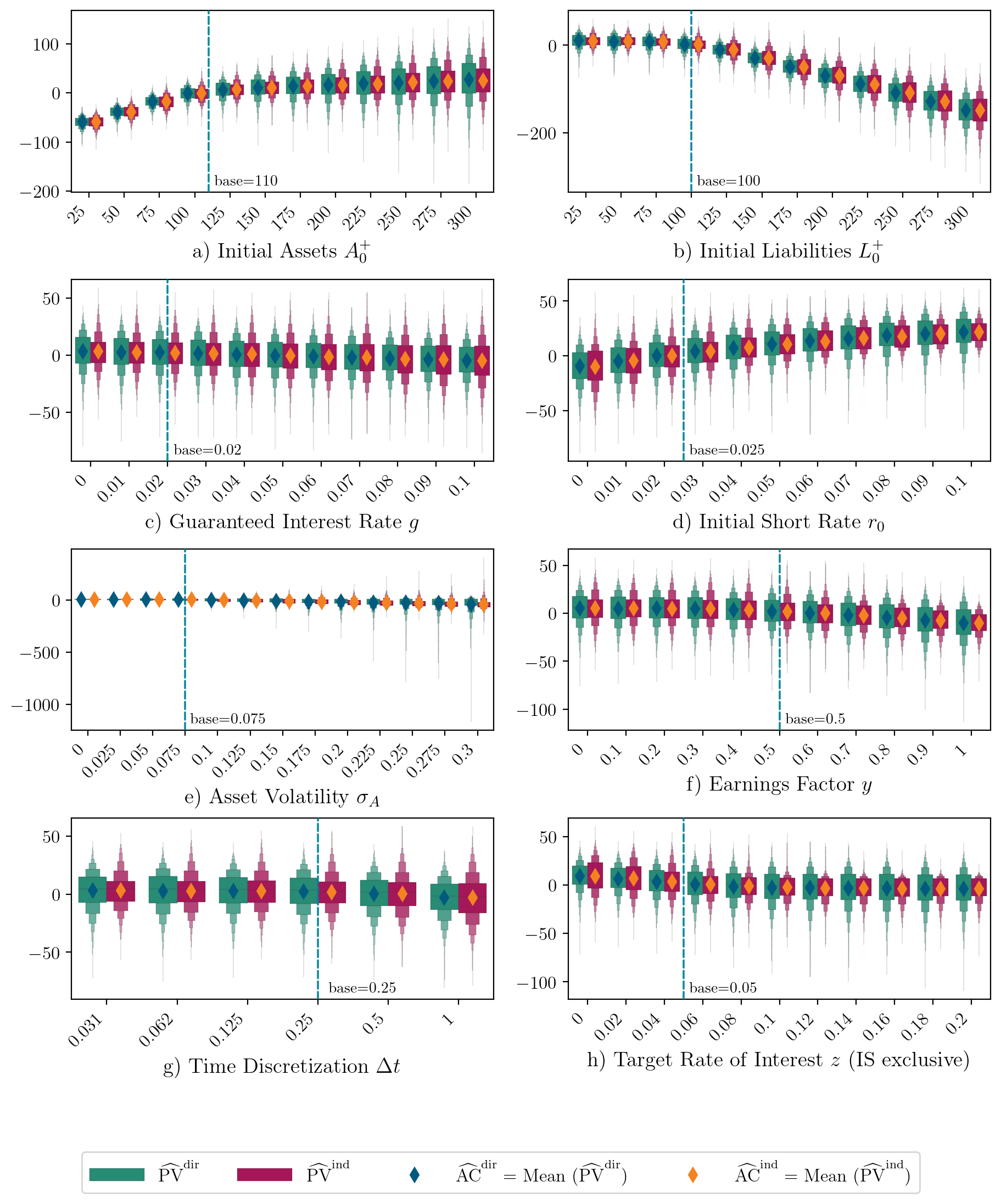}
	\caption{Distribution of $\ACdirest$ and $\ACindest$ for varying parameters in Bauer's model, IS case.}
	\label{fig:appendix_multiplot_vary_param_IS}
\end{figure}

\begin{figure}[tb]
\centering
\includegraphics[width=1\textwidth]{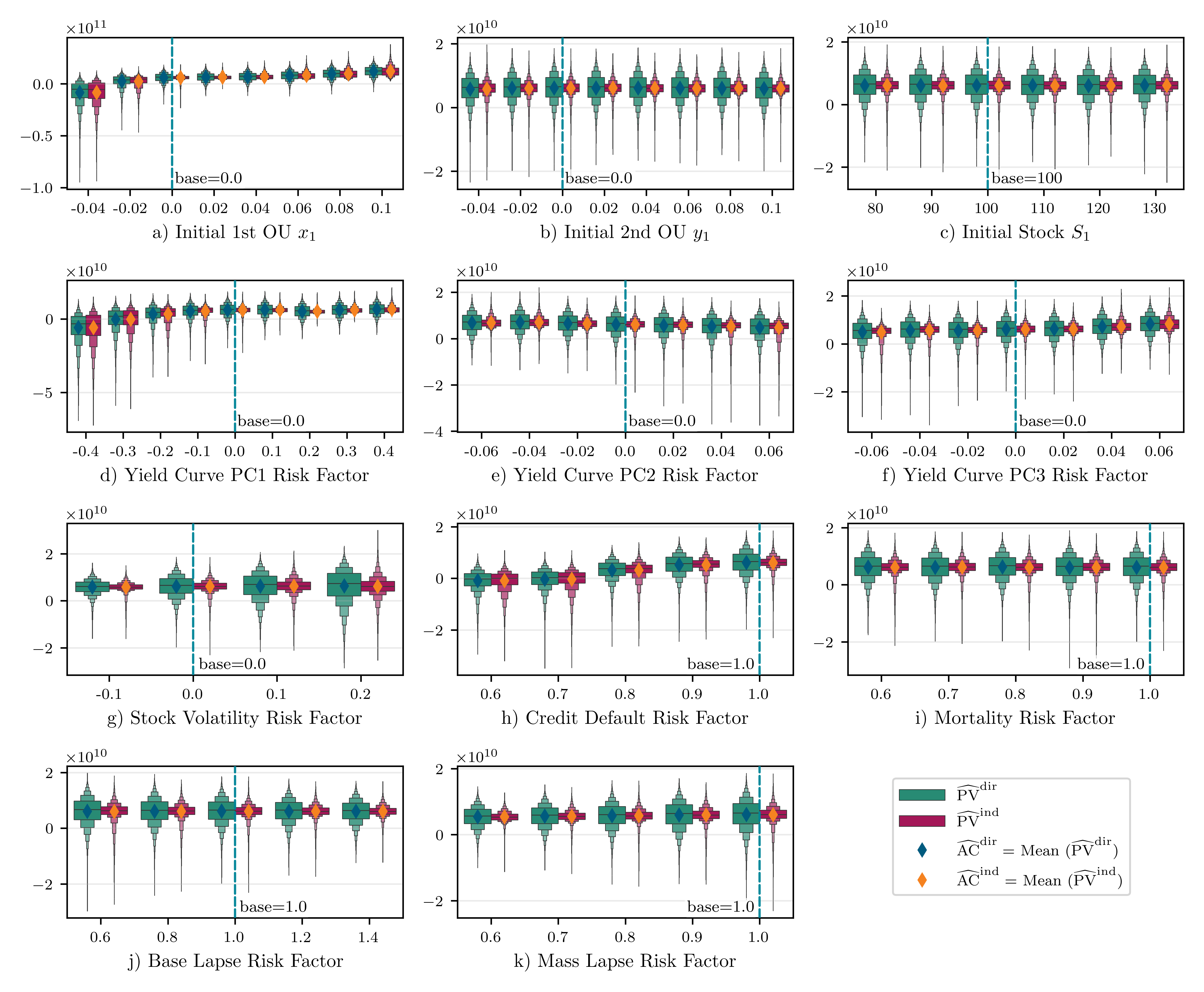}
\caption{Distribution of $\ACdirest$ and $\ACindest$ for varying parameters in openIRM.}
\label{fig:appendix_openIRM_per_parameter_boxen_direct_vs_indirect_0_outliers_removed}
\end{figure}

\begin{figure}[tb]
	\centering
	\includegraphics[width=1\textwidth]{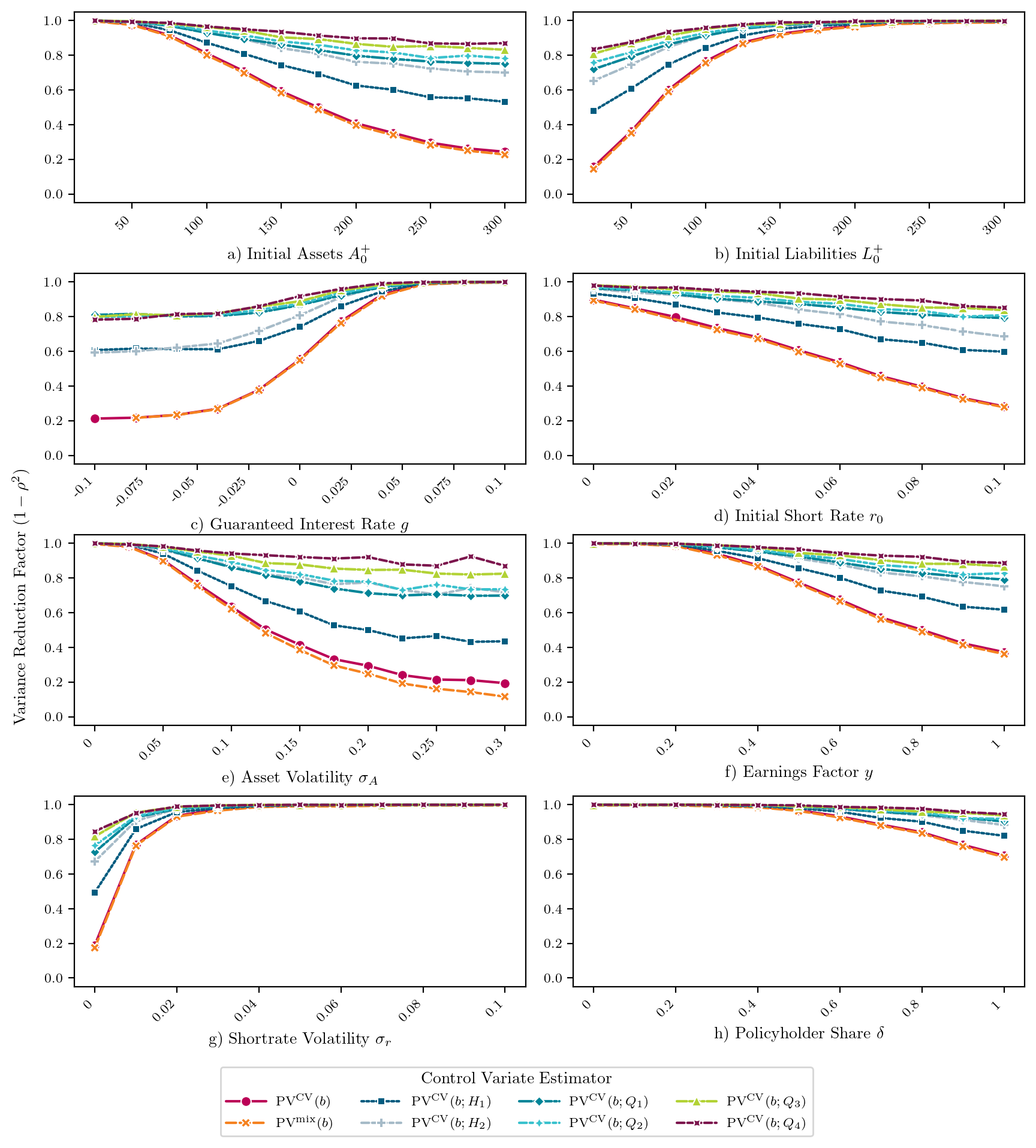}
	\caption{Variance reduction factor for various mixed control variate estimators for varying parameters in Bauer's model, MUST case. Mixed controls used: $H_i$ for $\timeInMixed$ being the first, respectively second half, and $Q_i$ for $\timeInMixed$ being the $i$-th quarter. We see that for a single control in Bauer's model MUST case, the indirect method provides the best variance reduction factor out of all mixed estimators looked at here.}
	\label{fig:appendix_multiplot_VRF_Multi_CV_Mixed_MUST}
\end{figure}

\begin{figure}[tb]
	\centering
	\includegraphics[width=1\textwidth]{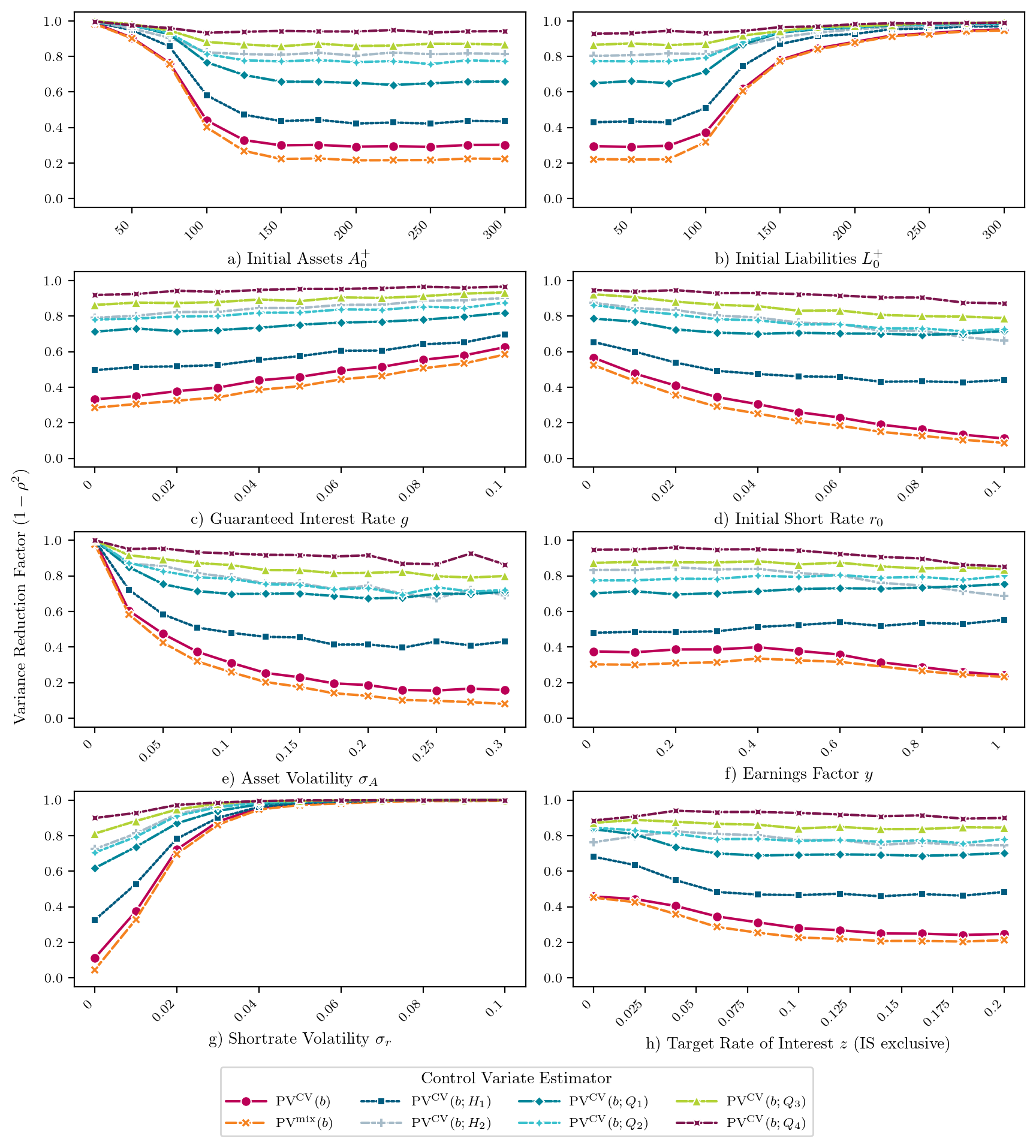}
	\caption{Variance reduction factor for various mixed control variate estimators for varying parameters in Bauer's model, IS case. Mixed controls used: $H_i$ for $\timeInMixed$ being the first, respectively second half, and $Q_i$ for $\timeInMixed$ being the $i$-th quarter. We see that for a single control in Bauer's model IS case, the indirect method provides the best variance reduction factor out of all mixed estimators looked at here.}
	\label{fig:appendix_multiplot_VRF_Multi_CV_Mixed_IS}
\end{figure}

\end{document}